\newcommand{\Tr}{\mbox{Tr}}
\newcommand{\ket}[1]{|#1\rangle}
\newcommand{\bra}[1]{\langle #1|}
\newsavebox{\@brx}
\newcommand{\llangle}[1][]{\savebox{\@brx}{\(\m@th{#1\langle}\)}%
  \mathopen{\copy\@brx\kern-0.5\wd\@brx\usebox{\@brx}}}
\newcommand{\rrangle}[1][]{\savebox{\@brx}{\(\m@th{#1\rangle}\)}%
  \mathclose{\copy\@brx\kern-0.5\wd\@brx\usebox{\@brx}}}
\newlength{\dhatheight} % lines below make the double-hat symbol
\newtheorem{theorem}{Theorem}[section]
\newtheorem{corollary}[theorem]{Corollary}
\newtheorem{claim}[theorem]{Claim}
\newenvironment{proof}[1][Proof.]{\begin{trivlist}
\item[\hskip \labelsep {\bfseries #1}]}{\end{trivlist}}
\newcommand{\qed}{\nobreak \ifvmode \relax \else
      \ifdim\lastskip<1.5em \hskip-\lastskip
      \hskip1.5em plus0em minus0.5em \fi \nobreak
      \vrule height0.75em width0.5em depth0.25em\fi}
\begin{document}

\title{Quantum Many-Body Scars in Dual-Unitary Circuits}
\author{Leonard Logari\'c}
\email[]{logaricl@tcd.ie}
\affiliation{Department of Physics, Trinity College Dublin, Dublin 2, Ireland}
\affiliation{Trinity Quantum Alliance, Unit 16, Trinity Technology and Enterprise Centre, Pearse Street, Dublin 2, D02 YN67, Ireland}
\author{Shane Dooley}
\email[]{dooleysh@gmail.com}
\affiliation{Department of Physics, Trinity College Dublin, Dublin 2, Ireland}
\affiliation{Trinity Quantum Alliance, Unit 16, Trinity Technology and Enterprise Centre, Pearse Street, Dublin 2, D02 YN67, Ireland}
\author{Silvia Pappalardi}
\affiliation{Institut f\"ur Theoretische Physik, Universit\"at zu K\"oln, Z\"ulpicher Straße 77, 50937 K\"oln, Germany}
\author{John Goold}
\affiliation{Department of Physics, Trinity College Dublin, Dublin 2, Ireland}
\affiliation{Trinity Quantum Alliance, Unit 16, Trinity Technology and Enterprise Centre, Pearse Street, Dublin 2, D02 YN67, Ireland}
\affiliation{Algorithmiq Limited, Kanavakatu 3C 00160 Helsinki, Finland}

\date{\today}

\begin{abstract}
Dual-unitary circuits are a class of quantum systems for which exact calculations of various quantities are possible, even for circuits that are nonintegrable. The array of known exact results paints a compelling picture of dual-unitary circuits as rapidly thermalizing systems. However, in this Letter, we present a method to construct dual-unitary circuits for which some simple initial states fail to thermalize, despite the circuits being ``maximally chaotic,'' ergodic and mixing. This is achieved by embedding quantum many-body scars in a circuit of arbitrary size and local Hilbert space dimension. We support our analytic results with numerical simulations showing the stark contrast in the rate of entanglement growth from an initial scar state compared to nonscar initial states. Our results are well suited to an experimental test, due to the compatibility of the circuit layout with the native structure of current digital quantum simulators.
\end{abstract}

\maketitle

{\bf \emph{Introduction.--}} 
Understanding how thermalization arises in closed quantum systems is a fundamental problem in many-body physics. Currently, our best understanding is based on the eigenstate thermalization hypothesis, which roughly states that thermalization occurs because the individual eigenstates of the unitary propagator appear thermal with respect to local observables \cite{DAl-16, Mor-18, Deu-91, Sre-94, Foi-19}. In this framework, failure to thermalize is due to the presence of nonthermal eigenstates. \emph{Strong ergodicity breaking} occurs in systems where a significant fraction of all the eigenstates are nonthermal, while \emph{weak ergodicity breaking} arises if the fraction of nonthermal eigenstates is exponentially small in system size \cite{Shi-17, Mou-22a, pap-2022, chan-23}. Strong ergodicity breaking is generally observed in systems with an extensive number of local conserved quantities, such as integrable \cite{Cal-16} or many-body localized (MBL) systems \cite{Nan-15a, Aba-19}, while weak ergodicity breaking is usually observed in systems with nonthermal eigenstates known as \emph{quantum many-body scars} (QMBS) \cite{Tur-18a, Ser-21, Mou-22a, pap-2022, chan-23, Des-22a, Doo-21a, Doo-23a}.

Because of the complex nature of interacting many-body dynamics, exact results are notoriously hard to come by. Significant progress has been made in recent years, by studying a special class of models, called \emph{dual-unitary (DU) circuits}. 
These are a class of quantum circuits constructed as a brickwork pattern of two-qudit gates, which are unitary in both temporal and spatial directions. This special property enables the exact calculation of some system properties that would ordinarily be prohibitively hard to calculate 
%\cite{Ber-18a, Ber-19a, Cla-20a, Ber-19b, Pir-20a, Ber-21a, Fri-21a, Cla-21a}
\cite{Aki-16, Ber-18a, Ber-19a, Cla-20a, Gop-19a, Ber-19b, Ber-20a, Pir-20a, Ber-21a, Ara-21a, Fri-21a, Cla-21a, Ler-21, Cher-21, Zho-22a, Suz-22, Cla-22a, Poz-22}.

For example, exact calculation of the spectral form factor has shown that interacting DU circuits are ``maximally chaotic,'' in the sense that the spectral form factor agrees with the predictions of random matrix theory at all timescales \cite{Ber-18a, Ber-21a}. Exact results also indicate that DU circuits are fast scramblers of quantum information, since two-time correlation functions and out-of-time correlators spread at their maximal possible velocities \cite{Ber-19a, Cla-20a}. In a similar spirit, from a certain class of solvable initial states it has been shown that entanglement growth occurs at the maximal rate \cite{Ber-19b, Pir-20a, Zho-22a} and that any finite subsystem thermalizes to its maximally mixed (i.e., infinite temperature) reduced density matrix in a short finite time \cite{Pir-20a}. Moreover, the exact results on two-time correlation functions allow a rigorous classification of DU circuits in terms of their ergodic and mixing properties \cite{Ber-19a, Cla-21a}.

The above exact properties seem to indicate that generic DU circuits are rapidly thermalizing systems \cite{Sin-21a}. This is further supported by the observation that it is impossible to induce MBL through disorder in DU circuits \cite{Ber-18a, Ber-21a}, leaving integrability as the only known mechanism of strong ergodicity breaking in DU circuits. However, to the best of our knowledge, so far there has been no demonstration of weak ergodicity breaking in DU circuits, nor any arguments against it as in the case of MBL.

In this Letter, we show that weak ergodicity breaking is indeed possible in DU circuits. Using a projector-embedding approach, initially proposed for continuous-time dynamics \cite{Shi-17}, we provide an explicit construction to insert QMBS in DU circuits of arbitrary system size and arbitrary local Hilbert space dimension. We demonstrate our construction with examples that embed a single, a few, or exponentially many QMBS in this class of circuits. Our construction shows that provably maximally chaotic, ergodic and mixing systems can support QMBS. Despite the rapid scrambling properties of such systems, if the system is initialized in the QMBS subspace then all quantum information remains localized in the subspace. Our analytical results are supported by numerical calculations for the entanglement growth, showing a striking difference in the dynamics between scar and nonscar initial states. Our work paves the way for future theoretical as well as experimental investigations of weak ergodicity breaking in quantum circuits, where dual unitarity can be leveraged \cite{Ste-22a, Koh-23}. Because of the native structure of current digital quantum simulators, some of the proposed models can be adapted directly to current devices. \\

{\bf \emph{Dual-unitary Circuits.--}} The basic building blocks of the circuits considered in this work are unitary operators $\hat{U}$, acting on two qudits of arbitrary local Hilbert space dimension $d$. The \emph{dual} $\tilde{U}$ of a unitary is defined by a reordering of the subsystem indices $\bra{k} \otimes \bra{l} \tilde{U} \ket{i} \otimes \ket{j} = \bra{j} \otimes \bra{l} \hat{U} \ket{i} \otimes \ket{k}$ \cite{Ber-19a} and can be physically interpreted as an exchange of the spatial and temporal dimensions of $\hat{U}$. A gate $\hat{U}$ is DU if and only if both $\hat{U}$ and its dual $\tilde{U}$ are unitary. Any two-qudit gate of the form
\begin{equation}\hat{U}^\text{DU,1}= (\hat{u}^+\otimes\hat{u}^-)\hat{S}\hat{V}(\hat{v}^-\otimes\hat{v}^+),\label{eq:new_DU_1}\end{equation} is dual unitary \cite{Poz-22,Pro-21a}. (See Supplemental Material \cite{SupMat}, Sec. I, for further details.)
Here, $\hat{S}\ket{i}\otimes\ket{j}=\ket{j}\otimes\ket{i}$ is the SWAP operator, $\hat{u}^\pm$ and $\hat{v}^\pm$ are arbitrary single-qudit unitaries, and \begin{equation}\hat{V}=\exp\{i\sum_{j=0}^{d-1}\hat{h}^{(j)}\otimes\ket{j}\bra{j}\},\label{eq:V}\end{equation} is an entangling gate with arbitrary single-qudit Hermitian operators $\hat{h}^{(j)}$. We rewrite the nonentangling unitaries as
\begin{align} \label{eq: single_qudit_f}
\hat{u}^+\otimes\hat{u}^-=\exp\{i (\hat{f}^+\otimes\hat{\mathbb{I}}+\hat{\mathbb{I}}\otimes\hat{f}^-)\},  
\\ \label{eq: single_qudit_g}
\hat{v}^-\otimes\hat{v}^+=\exp\{i (\hat{g}^-\otimes\hat{\mathbb{I}}+\hat{\mathbb{I}}\otimes\hat{g}^+) \} , 
\end{align} 
in terms of the single-qudit Hermitian operators $\hat{f}^\pm$ and $\hat{g}^\pm$. With this parametrization, a dual-unitary gate is specified by $\{ \hat{f}^\pm$, $\hat{g}^\pm$, $\hat{h}^{(j)} \}$. Note that, for any $\hat{U}^\text{DU,1}$, the gate \begin{equation} \hat{U}^\text{DU,2} = \hat{S}\,  \hat{U}^\text{DU,1} \hat{S} \label{eq:new_DU_2} \end{equation} is also DU. Generally, this expression is not expressible in the form of Eq. \eqref{eq:new_DU_1}, thus giving a distinct parametrization.

A DU circuit is a quantum circuit in a ``brickwork'' geometry, in which all of the two-qudit gates are DU. We consider an even number $N$ of qudits, where each qudit has a local Hilbert space $\mathbb{C}^{d}$. The qudit sites are labeled by $n = 0, 1, 2, ..., N - 1$, and we impose periodic boundary conditions $n \equiv n + N$. The basic building blocks of the dynamics are local DU gates $\hat{U}_{n,n+1}$. A single time step is implemented by a Floquet unitary operator $\hat{\mathbb{U}} = \hat{\mathbb{U}}_o \hat{\mathbb{U}}_e$ which is a layer of DU gates across even-odd bonds $\hat{\mathbb{U}}_e = \bigotimes_{j=0}^{N/2 - 1} \hat{U}_{2j,2j+1}$ and a layer of DU gates across odd-even bonds $\hat{\mathbb{U}}_o = \bigotimes_{j=1}^{N/2} \hat{U}_{2j-1,2j}$. We set all gates in the even layer to be identical to each other and in the form of Eq. \eqref{eq:new_DU_1}, $\hat{U}_{0,1} = \hat{U}_{2,3} = \cdots = \hat{U}^\text{DU,1}$, and all gates in the odd layer to be identical to each other and in the form of Eq. \eqref{eq:new_DU_2}, i.e., $\hat{U}_{1,2} = \hat{U}_{3,4} = \hdots = \hat{U}^{DU,2} = \hat{S} \hat{U}^\text{DU,1} \hat{S}$. However, this two-site translation invariance is not essential for our results, which are also valid if the gates vary from site to site. Evolution for $t\in\mathbb{Z}$ time steps is generated by powers of the Floquet operator  $\hat{\mathbb{U}}^t$. \\

{\bf \emph{Embedding QMBS in DU circuits.--}} 
%In general DU circuits do not host QMBS. 
In order to construct DU circuits with QMBS, we employ a projector embedding method, similar to the construction initially proposed by Shiraishi and Mori for continuous-time dynamics in Ref. \cite{Shi-17}. The essential idea is to use projectors in the generators of the unitary gates in Eqs. \eqref{eq:V}--~\eqref{eq: single_qudit_g} so that a chosen set of target states evolves by an elementary brickwork circuit of two-qudit swap gates, while states outside the target set evolve by more complicated dynamics.

Let $\hat{P}_{n, n+1}$ denote two-qudit projectors acting on neighboring sites $n$ and $n+1$ and define the extended projectors acting on the total system of $N$ qudits as $\hat{\mathbb{P}}_{n, n+1} \equiv \hat{\mathbb{I}}_{0, n-1} \otimes \hat{P}_{n, n+1} \otimes \hat{\mathbb{I}}_{n+2, N-1}$, where $\hat{\mathbb{I}}_{i, j}$ is the identity acting on all qudits in the range $i, i+1, ..., j-1, j$. The common kernel of all projectors is the set of states that are simultaneously annihilated by all projectors:
\begin{equation} \mathcal{K} = \{ \ket{\psi} : \hat{\mathbb{P}}_{n,n+1} \ket{\psi} = 0, \forall \: n \} . \end{equation} Our target set of states $\mathcal{T}$, which we wish to embed in our circuit as QMBS, is the subset of $\mathcal{K}$ that is invariant under the action of even and odd layers of SWAP gates: \begin{equation} \mathcal{T} = \{ \ket{\psi} : \ket{\psi}\in\mathcal{K}, \: \hat{\mathbb{S}}_{e}\ket{\psi}\in\mathcal{K}, \: \hat{\mathbb{S}}_{o}\ket{\psi}\in\mathcal{K} \},\end{equation} 
where $\hat{\mathbb{S}}_{e} = \bigotimes_{j=0}^{N/2-1}\hat{S}_{2j,2j+1}$ is the even layer of SWAP gates and $\hat{\mathbb{S}}_{o} = \bigotimes_{j=1}^{N/2}\hat{S}_{2j-1,2j}$ is the odd layer. %Check that the notation is consistent

We now outline a construction to embed $\mathcal{T}$ as a set of scars in our DU circuit. To do this, we impose three conditions on the generators $\{\hat{f}^\pm, \hat{g}^\pm, \hat{h}^{(j)} \}$ that define the DU gates in Eqs. \eqref{eq:new_DU_1}--\eqref{eq:new_DU_2} and on the projectors $\{ \hat{P}_{n,n+1} \}$ that define the target subspace: \begin{equation}  \hat{P}_{n,n+1} (\hat{f}_n^+ \otimes \hat{\mathbb{I}} + \hat{\mathbb{I}} \otimes \hat{f}_{n+1}^-) \hat{P}_{n,n+1} = \hat{f}_n^+ \otimes \hat{\mathbb{I}} + \hat{\mathbb{I}} \otimes \hat{f}_{n+1}^- , \label{eq:C1} \end{equation}
\begin{equation}  \hat{P}_{n,n+1} (\hat{g}_n^- \otimes \hat{\mathbb{I}} + \hat{\mathbb{I}} \otimes \hat{g}_{n+1}^+) \hat{P}_{n,n+1} = \hat{g}_n^- \otimes \hat{\mathbb{I}} + \hat{\mathbb{I}} \otimes \hat{g}_{n+1}^+ , \label{eq:C2} \end{equation}
\begin{equation} \hat{P}_{n,n+1} \Big( \sum_{j=0}^{d-1} \hat{h}_n^{(j)} \otimes \ket{j}\bra{j}_{n+1} \Big) \hat{P}_{n,n+1}  = \sum_{j=0}^{d-1} \hat{h}_n^{(j)} \otimes \ket{j}\bra{j}_{n+1} . \label{eq:C3} 
\end{equation}
Clearly, two-qudit gates satisfying conditions \eqref{eq:C1}--\eqref{eq:C3} are still of the DU form in Eq. \eqref{eq:new_DU_1}. Conditions \eqref{eq:C1}--\eqref{eq:C3} ensure that the unitaries $\hat{u}_n^+ \otimes \hat{u}_{n+1}^-$, $\hat{v}_n^+ \otimes \hat{v}_{n+1}^-$, and $\hat{V}_{n,n+1}$ act trivially on all states in the kernel $\mathcal{K}$ of the projectors $\hat{P}_{n,n+1}$. However, the dynamics are not necessarily closed in the subspace $\mathcal{K}$. This is because the parametrization of Eq. \eqref{eq:new_DU_1} includes the SWAP operators $\hat{S}_{n,n+1}$ which can potentially take states out of $\mathcal{K}$. So, the QMBS subspace is the subspace $\mathcal{T} \subset \mathcal{K}$ that is invariant under even $\hat{\mathbb{S}}_e$ and odd $\hat{\mathbb{S}}_o$ layers of SWAP operators. For any initial state $\ket{\psi} \in \mathcal{T}$ the circuit will then act as $\hat{\mathbb{U}}\ket{\psi} = \hat{\mathbb{S}} \ket{\psi}$, where $\hat{\mathbb{S}} = \hat{\mathbb{S}}_o \hat{\mathbb{S}}_e$, corresponding to integrable dynamics in the target subspace $\mathcal{T}$, even if the overall circuit $\hat{\mathbb{U}}$ is nonintegrable and results in complicated dynamics for initial states that are not in $\mathcal{T}$. The action of $\hat{\mathbb{U}}$ in $\mathcal{T}$ can, thus, be seen as a permutation of the qudits and many dynamical properties can, therefore, be studied exactly.

This prescription embeds QMBS in a DU circuit, which allows us to use all known exact results for these models. One of the key results for DU circuits is the rigorous classification in terms of their ergodic and mixing properties, based on the spectrum of a map $\mathcal M$ which determines the dynamical correlation functions \cite{Ber-19a, Cla-21a}. We can show that our construction leads to QMBS in provably ergodic and mixing many-body systems, since all dynamical correlation functions decay to their infinite temperature values at long times. Further details are provided in Supplemental Material \cite{SupMat}, Sec. II. \\

\begin{figure}
    \includegraphics[width = 0.48\textwidth]{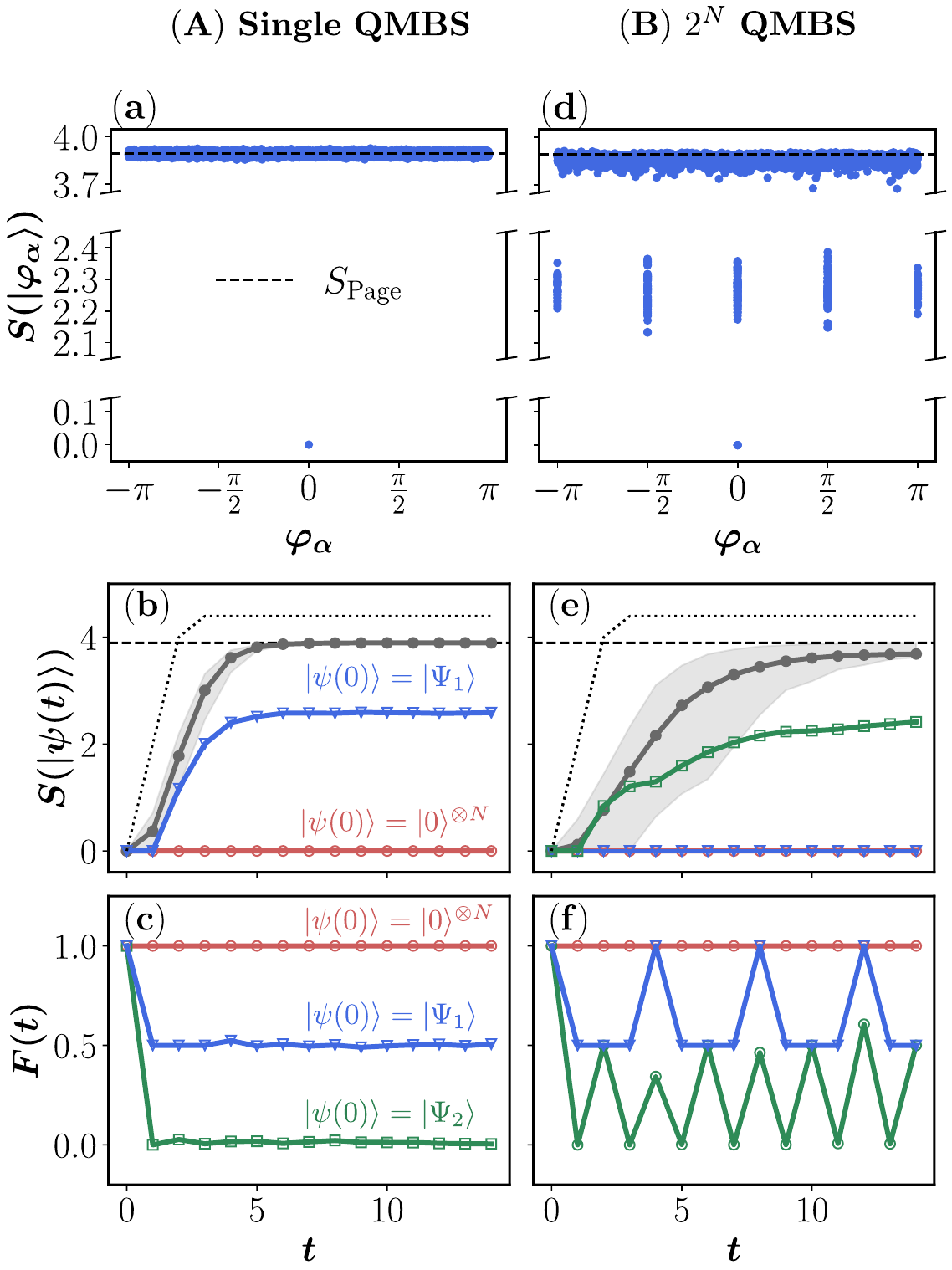}
    \caption{Embedding a single QMBS (A) and an exponential number of QMBS (B) in a DU circuit. The top row (a),(d) shows the bipartite entanglement of eigenstates $\ket{\varphi_\alpha}$ of the Floquet unitary $\hat{\mathbb{U}}$ in each case. The second row (b),(e) shows the growth of bipartite entanglement, starting from different separable states. The gray line is the average entanglement growth for 100 random product states of the form $\ket{\psi(0)} = \ket{i_0,i_1,\hdots,i_{N-1}}$, $i_n \in \{0,1,\hdots,d-1 \}$, while the surrounding gray region shows the 5-95 percentile range. The dashed black line in (a), (b), (d), and (e) is the Page entropy, $S_\text{Page} = [N \text{ln}(d) - 1]/2$. The dotted line in (b) and (d) shows the maximum possible rate of entanglement growth, saturating at the maximum value $S_{\text{max}} = N \text{ln}(d)/2$. ``Solvable'' initial states have this maximal entanglement growth in the thermodynamic limit \cite{Ber-19b,Pir-20a,Fol-23}. The bottom row (c), (f) shows the overlap with the initial state $F(t)=|\langle \psi(0) | \psi(t)\rangle|$. The blue line in (b), (c), (e), and (f) shows the results for $\ket{\Psi_1} = \ket{0}^{N-1}\otimes [(\ket{0} + \ket{d-1})/\sqrt{2}]$, and the green line for $\ket{\Psi_2} = \ket{0,0,d-1,d-1}^{\otimes N/4-1}\otimes\ket{0,0,d-1}\otimes (\ket{d-1} + \ket{1})/\sqrt{2}$.  [Parameters: $N=8$ and $d=3$]}
    \label{fig:QMBS_in_DUC}
\end{figure}

{\bf \emph{Example A: Single QMBS.--}} 
We first illustrate our construction with the simplest possible example, for which the set of projectors is \begin{equation} \hat{P}_{n,n+1} = \hat{\mathbb{I}}_{n,n+1} - \ket{0}\bra{0}_n \otimes \ket{0}\bra{0}_{n+1} , \label{eq:P_1} \end{equation} for all $n = 0,1,\hdots,N-1$. The common kernel $\mathcal{K}$ of all projectors consists of a single state $\ket{0}^{\otimes N}$, which is invariant under the action of any SWAP gates so that the target space is the single state $\mathcal{T} = \{ \ket{0}^{\otimes N} \}$. 

We choose $\{ \hat{f}^\pm, \hat{g}^\pm, \hat{h}^{(j)} \}$ to be random Hermitian matrices, apart from some rows and columns $\bra{i}\hat{f}^\pm \ket{0}=\bra{i}\hat{g}^\pm\ket{0}=\bra{i}\hat{h}^{(0)}\ket{0}=0$, $i\in\{0,1,...,d-1\}$, that are set to zero to ensure that conditions \eqref{eq:C1}--\eqref{eq:C3} are satisfied (see Supplemental Material \cite{SupMat}, Sec. III., for further details). This choice gives a generic DU gate which we expect to give a rapidly thermalizing circuit. Moreover, we can prove that for this example our circuit is ergodic and mixing \cite{Ber-19a, Cla-21a} (see Supplemental Material \cite{SupMat}, Sec. II). However, by our construction the target state $\ket{0}^{\otimes N}$ will be a nonthermal eigenstate of the circuit $\hat{\mathbb{U}}\ket{0}^{\otimes N} = \ket{0}^{\otimes N}$.

Although this construction embeds a QMBS for any system size $N$, we demonstrate our results with numerical simulations on finite size systems. In Fig. \ref{fig:QMBS_in_DUC}(a), we plot the half-system bipartite entanglement entropy $S(\ket{\varphi_{\alpha}}) = -\text{Tr}[\hat{\rho}_{\alpha} \text{log}(\hat{\rho}_\alpha)]$ of the eigenstates $\ket{\varphi_{\alpha}}$ of the Floquet unitary $\hat{\mathbb{U}}$. Here $\hat{\rho}_\alpha = \Tr_{0,(N/2)-1} \ket{\varphi_\alpha}\bra{\varphi_\alpha}$ is the reduced density matrix of the eigenstate obtained by tracing out the first $N / 2$ qudits of the system. We see the single QMBS $\ket{0}^{\otimes N}$ with zero entanglement, separated from the rest of the spectrum. All the other eigenstates are concentrated around the Page entropy $S_\text{Page}$, which is the expected value of entanglement entropy for a random pure state \cite{Pag-93}. Further evidence of the QMBS can be seen in the dynamics. In Fig. \ref{fig:QMBS_in_DUC}(b), we show that, for the initial QMBS state $\ket{\psi(0)} = \ket{0}^{\otimes N}$, the bipartite entanglement $S[\ket{\psi(t)}]$ does not grow in time, and in Fig. \ref{fig:QMBS_in_DUC}(c) that the fidelity $F(t) = |\bra{\psi(0)} \hat{\mathbb{U}}^t \ket{\psi(0)}|$ remains constant at $F=1$. By contrast, a typical random initial product state of the form $\ket{\psi(0)} = \ket{\vec{i}} \equiv \ket{i_0,i_1,\hdots,i_{N-1}}$, with $i_n$ chosen uniformly at random from the set $\{0,1,\hdots,d-1 \}$, will give rapid growth in entanglement that saturates at the Page value $S_\text{Page}$ and will give a fidelity that decays rapidly to zero. 

We further devise an initial state with intermediate slow dynamics. We consider the product state $\ket{\psi(0)} = \ket{\Psi_1} \equiv \ket{0}^{N-1}\otimes [(\ket{0} + \ket{d-1})/\sqrt{2}]$. This initial state is partly overlapping with the QMBS $\ket{0}^{\otimes N}$ and is partly overlapping with the thermalizing state $\ket{0}^{\otimes (N-1)}\otimes\ket{d-1}$. Correspondingly, the entanglement entropy saturates to a value intermediate between zero and the thermal value, while the fidelity approaches $F \approx 1/2$ after $t \sim 1$ time step.

This construction can be extended to embed $k < d$ QMBS of the form $\{ \ket{i}^{\otimes N} \}_{i=0}^{k-1}$ into a DU circuit with local Hilbert space dimension $d$. The main adjustment is, instead of Eq. \eqref{eq:P_1}, to use the set of projectors $\hat{P}_{n,n+1} = \hat{\mathbb{I}}_{n,n+1} - \sum_{i=0, }^{k-1}\ket{i,i}\bra{i,i}_{n,n+1}$. We provide numerical demonstrations for the $k=2$ case in Supplemental Material \cite{SupMat}, Sec. V, as well as a discussion on how the circuit can be modified to break the eigenphase degeneracy between the QMBS \cite{SupMat}, Sec IV. We also show that the QMBS in the Floquet unitary $\hat{\mathbb{U}}$ have counterpart \emph{dual} QMBS in the dual Floquet unitary $\tilde{\mathbb{U}}$ \cite{SupMat}, Sec. VI. \\

{\bf \emph{Example B: Exponentially many QMBS.--}} 
A more complex example is obtained by using the projectors: \begin{eqnarray} \hat{P}_{n,n+1} = \hat{\mathbb{I}}_{n,n+1}  &-& \ket{0}\bra{0}_n \otimes \ket{0}\bra{0}_{n+1} \label{eq:P_exp} \\ &-& \ket{0}\bra{d-1}_n \otimes \ket{d-1}\bra{0}_{n+1} \nonumber \\ &-& \ket{d-1}\bra{0}_n \otimes \ket{0}\bra{d-1}_{n+1} \nonumber \\ &-& \ket{d-1}\bra{d-1}_n \otimes \ket{d-1}\bra{d-1}_{n+1} , \nonumber \end{eqnarray} for all $n$. We choose the DU gate generators $\{ \hat{f}^\pm, \hat{g}^\pm, \hat{h}^{(j)} \}$ to be random, apart from a few rows and columns that are set to zero to ensure that conditions \eqref{eq:C1}--\eqref{eq:C3} are satisfied [$\bra{i}\hat{f}^\pm\ket{j}=\bra{i}\hat{g}^\pm\ket{j}=\bra{i}\hat{h}^{(j)}\ket{j}=0$, $j\in\{0,d-1\}$, $i\in\{0,1,\hdots,d-1\}$].

The common kernel $\mathcal{K}$ of the projectors in Eq. \eqref{eq:P_exp} is the $2^N$-dimensional space spanned by the set of all $N$-qudit product states of the form $\ket{\vec{i}} = \ket{i_0,i_1,\hdots,i_{N-1}}$ with $i_n \in \{ 0, d-1 \}$. We note that this space, though exponentially large in the system size $N$, is still an exponentially small fraction $(2/d)^N$of the full Hilbert space when $d>2$. Therefore, the fraction of nonthermal eigenstates constitutes a measure-zero set in the thermodynamic limit and, thus, leads to weak ergodicity breaking, consistent with the definition in Refs. \cite{Ser-21, Mou-22a}. It is also invariant under the action of $\hat{\mathbb{S}}_{e/o}$, so that our target space is $\mathcal{T} = \mathcal{K}$. 
However, this does not mean that the product states $\ket{\vec{i}}$, $i_n \in \{ 0, d-1 \}$, are QMBS since, in general, $\hat{\mathbb{U}}\ket{\vec{i}} = \hat{\mathbb{S}}\ket{\vec{i}} \neq \ket{\vec{i}}$. Instead, the QMBS are found by diagonalizing the SWAP circuit $\hat{\mathbb{S}}$ in the subspace spanned by $\{ \ket{\vec{i}} \}_{i_n \in \{ 0, d-1 \} }$.

In Fig. \ref{fig:QMBS_in_DUC}(d), we plot the entanglement entropy of the eigenstates of $\hat{\mathbb{U}}$. The bulk of the eigenstates have an entropy close to the Page value $S_\text{Page}$. However, the $2^N$ states in $\mathcal{T}$ have a lower entanglement. There are, in fact, four QMBS that have zero entanglement. These are $\ket{0}^{\otimes N}$, $\ket{d-1}^{\otimes N}$, $\ket{0,d-1}^{\otimes N/2}$, and $\ket{d-1,0}^{\otimes N/2}$, which are eigenstates of the SWAP circuit $\hat{\mathbb{S}}$. The remaining $2^N - 4$ QMBS have nonzero entanglement but are still clearly separated from the bulk of the spectrum. In Supplemental Material \cite{SupMat}, Sec. VII, we prove that the entanglement entropy of these QMBS can be bounded by $S \leq \log(N/2)$, a subvolume law scaling \cite{Sur-PC}.

The presence of exponentially many QMBS can also be observed in the dynamics, as shown in the entanglement entropy growth and fidelity dynamics in Figs. \ref{fig:QMBS_in_DUC}(e) and \ref{fig:QMBS_in_DUC}(f). When the system is initialized in a typical random product state (grey in the plot), it rapidly thermalizes to a highly entangled state.  Conversely, for an initial product state in the QMBS subspace $\mathcal{T}$, the entanglement growth is completely suppressed, since the SWAP circuit $\hat{\mathbb{S}}$ can map only product states to product states. For instance, the initial states $\ket{\psi(0)} = |0\rangle^{\otimes N}$ and $\ket{\psi(0)} = \ket{\Psi_1} \equiv \ket{0}^{N-1}\otimes (\ket{0} + \ket{d-1})/\sqrt{2}$ [red and blue lines, respectively, in Fig. \ref{fig:QMBS_in_DUC}(e)] show zero entanglement growth as they are both in the QMBS subspace $\mathcal{T}$.

However, the frozen entanglement from initial product states in $\mathcal{T}$ does not capture all the information about the dynamics in the QMBS subspace nor the variety of dynamical behaviors that are possible. For instance, evolution from $\ket{\psi(0)}=\ket{\Psi_1} = \ket{0}^{N-1}\otimes (\ket{0} + \ket{d-1})/\sqrt{2}$ is characterized by an oscillating fidelity $F=|\langle\psi(0)|\psi(t)\rangle|$. This is due to the fact that $\ket{\Psi_1}$, although it is in the QMBS subspace, is a superposition of multiple QMBS. The $\ket{0}^{\otimes N}$ component of $\ket{\Psi_1}$ is a QMBS and does not decay, while the $\ket{0}^{\otimes N-1}\otimes\ket{d-1}$ component is a superposition of QMBS and leads to fidelity oscillations with a period $T=N/2$. Alternatively, by initializing in $\ket{\Psi_2} = \ket{0,0,d-1,d-1}^{\otimes N/4-1}\otimes\ket{0,0,d-1}\otimes (\ket{d-1} + \ket{1})/\sqrt{2}$ (green in Fig. \ref{fig:QMBS_in_DUC}) the system undergoes more rapid oscillatory dynamics. This initial state is an equal superposition of a state $\ket{0,0,d-1,d-1}^{\otimes N/4}$ which is in the QMBS subspace and one that is not $\ket{0,0,d-1,d-1}^{\otimes N/4 - 1}\otimes\ket{0,0,d-1,1}$. While the former leads to revivals in fidelity with period $T=2$, the latter contribution rapidly decays to zero. 

In Supplemental Material \cite{SupMat}, Sec. VIII, we report further numerical results, including the dynamics of local observables, illustrating the presence of QMBS. \\ 

%\section{Discussion}
{\bf \emph{Discussion.--}}
Dual-unitary circuits are a paradigmatic model for investigations into many-body phenomena due to the abundance of available exact results. In this Letter, we have shown that these models can host QMBS, which lead to weak ergodicity breaking in a provably maximally chaotic system. We provide a systematic way to embed QMBS into DU circuits and highlight the contrast with the rest of the spectrum via numerical simulations.

The presented results motivate several fundamental questions with respect to QMBS, DU circuits, and chaotic quantum many-body systems, in general. Because the fact that for $d>2$ the used parametrization is not complete for DU gates, we expect that there are more DU circuit instances which can host QMBS. However, it is not certain whether the proposed embedding approach will work for DU circuits constructed with gates which lie outside the used form in Eq. \eqref{eq:new_DU_1} for $d > 2$. Furthermore, even with our parametrization, our embedding approach is probably not exhaustive. Further theoretical investigations are required to obtain a more complete picture of QMBS in DU circuits and their properties. 

\begin{acknowledgments}
We acknowledge helpful conversations Jean-Yves Desaules, Nathan Keenan, and Federica Maria Surace. J.G. and L.L. acknowledge financial support by Microsoft Ireland. S.D. and J.G. acknowledge financial support from the SFI-EPSRC joint project QuamNESS. J.G. is supported by a SFI-Royal Society University Research Fellowship and acknowledges funding from European Research Council Starting Grant ODYSSEY (Grant Agreement No. 758403). S.P. acknowledges support by the Deutsche Forschungsgemeinschaft (DFG, German Research Foundation) under Germany’s Excellence Strategy - Cluster of Excellence Matter and Light for Quantum Computing (ML4Q) EXC 2004/1-390534769.
\end{acknowledgments}

% Create the reference section using BibTeX:

%\bibliographystyle{unsrt} % unsrt, acmapsrev4-1, apsrev4-1
\bibliography{refs}

\clearpage

\setcounter{page}{1}
% Specify following sections are appendices. Use \appendix* if there
% only one appendix.
\onecolumngrid 

{\large \textbf{Supplementary Material for ``Quantum many-body scars in dual unitary circuits''}}

{\large Leonard Logari\'c, Shane Dooley, Silvia Pappalardi, John Goold}
%\appendix

\section{Some properties of dual-unitary gates} \label{app:U_DU_parameterisation}

For completeness, in this section we prove several properties of dual-unitary gates, including that $\hat{U}^\text{DU,1}$ and $\hat{U}^\text{DU,2}$ [Eqs. \eqref{eq:new_DU_1} and \eqref{eq:new_DU_2} in the main text] are dual-unitary (which, we note, was already shown in Ref. \cite{Poz-22}).

\begin{claim} \label{claim:local_u} Suppose that $\hat{U}_A$ and $\hat{U}_B$ are two-qudit unitary operators that are related by the local unitary transformation $\hat{U}_B = (\hat{u}_+\otimes\hat{u}_-)\hat{U}_A (\hat{v}_-\otimes\hat{v}_+)$, where $\hat{u}_\pm$, $\hat{v}_\pm$ are single qudit unitaries. Then the dual of $\hat{U}_B$ is $\tilde{U}_B = (\hat{v}_+^T \otimes \hat{u}_-) \tilde{U}_A (\hat{v}_- \otimes \hat{u}_+^T)$, where $\tilde{U}_A$ is the dual of $\hat{U}_A$. \end{claim}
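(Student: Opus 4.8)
The plan is to prove the claim by a direct computation in the computational basis, since the dual is itself defined component-wise through an index reshuffling, so everything reduces to careful bookkeeping. First I would fix the convention $\hat{U}_{ab,cd} \equiv \bra{a}\otimes\bra{b}\hat{U}\ket{c}\otimes\ket{d}$, with $(a,b)$ the outgoing (bra) indices and $(c,d)$ the incoming (ket) indices. In this notation the defining relation for the dual reads $\tilde{U}_{kl,ij} = \hat{U}_{jl,ik}$: the outgoing-left index and the incoming-right index are interchanged, while the outgoing-right and incoming-left indices are untouched. I would also record the inverse of this relation, $(\hat{U}_A)_{a'b',c'd'} = \tilde{U}_{A,\,d'b',c'a'}$, which simply reapplies the same reshuffling.

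Next I would expand the matrix elements of $\hat{U}_B = (\hat{u}_+\otimes\hat{u}_-)\hat{U}_A(\hat{v}_-\otimes\hat{v}_+)$ by inserting resolutions of the identity, obtaining
$$\hat{U}_{B,\,ab,cd} = \sum_{a'b'c'd'} (\hat{u}_+)_{aa'}(\hat{u}_-)_{bb'}\,(\hat{U}_A)_{a'b',c'd'}\,(\hat{v}_-)_{c'c}(\hat{v}_+)_{d'd}.$$
Applying the reshuffling $(a,b,c,d)\mapsto(j,l,i,k)$ to evaluate $\tilde{U}_{B,\,kl,ij}=\hat{U}_{B,\,jl,ik}$, and then substituting $(\hat{U}_A)_{a'b',c'd'} = \tilde{U}_{A,\,d'b',c'a'}$, converts the expression into a product containing $\tilde{U}_A$ with the four single-qudit matrices attached to definite legs.

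The final step is to match the result against the generic form $(\hat{X}\otimes\hat{Y})\tilde{U}_A(\hat{W}\otimes\hat{Z})$ and read off the gates. The factors $\hat{u}_-$ and $\hat{v}_-$, which sat on the outgoing-right and incoming-left legs (exactly the two legs left fixed by the reshuffling), reappear unchanged as $\hat{Y}=\hat{u}_-$ and $\hat{W}=\hat{v}_-$. The remaining two gates are carried across the folded legs and therefore appear with their row and column indices swapped: $\hat{v}_+$, originally on the incoming-right leg, lands on the new outgoing-left leg as $\hat{X}=\hat{v}_+^T$, and symmetrically $\hat{u}_+$ becomes $\hat{Z}=\hat{u}_+^T$, reproducing $\tilde{U}_B = (\hat{v}_+^T\otimes\hat{u}_-)\tilde{U}_A(\hat{v}_-\otimes\hat{u}_+^T)$.

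The only real obstacle, and the step I would double-check most carefully, is precisely this index bookkeeping: recognizing that a gate attached to an incoming leg of $\hat{U}_A$ becomes attached to an outgoing leg of $\tilde{U}_A$, which interchanges its row and column indices and hence produces a transpose. A helpful consistency check is the graphical picture in which the dual simply rotates the gate so that the incoming-right and outgoing-left legs are exchanged: unitaries on the two static legs survive intact, while those on the rotated legs are transposed. I would also confirm that no complex conjugation enters, since the reshuffling only permutes indices and never conjugates matrix entries, so the operation is a genuine transpose rather than a Hermitian conjugate.
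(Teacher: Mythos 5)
Your proof is correct and takes essentially the same approach as the paper's: a direct computation in the computational basis using the component form of the dual, $\tilde{U}_{kl,ij}=\hat{U}_{jl,ik}$ (which matches the paper's definition), followed by reading off which single-qudit factors land on the exchanged legs and hence acquire a transpose. The paper organizes the identical bookkeeping only slightly differently, expanding just $\hat{u}_+$ and $\hat{v}_+$ in components while keeping $\hat{u}_-$ and $\hat{v}_-$ as operators, and your index manipulations (including the inverse relation $(\hat{U}_A)_{a'b',c'd'}=\tilde{U}_{A,\,d'b',c'a'}$ and the observation that no complex conjugation arises) all check out.
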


\begin{proof} First, we write $\hat{U}_A = \sum_{i,j,k,l = 0}^{d-1} [U_A]_{ij}^{kl} \ket{k} \bra{i} \otimes \ket{l} \bra{j}$ in terms of the single-qudit local basis $\{ \ket{i} \}_{i=0}^{d-1}$, and we note that its dual is the operator $\tilde{U}_A = \sum_{i,j,k,l = 0}^{d-1} [U_A]_{ij}^{kl} \ket{j} \bra{i} \otimes \ket{l} \bra{k}$ obtained by exchanging the first ket and the last bra (with respect to this local basis).

\noindent Substituting the expression $\hat{U}_A = \sum_{i,j,k,l = 0}^{d-1} [U_A]_{ij}^{kl} \ket{k} \bra{i} \otimes \ket{l} \bra{j}$ into the identity $\hat{U}_B = (\hat{u}_+\otimes\hat{u}_-)\hat{U}_A (\hat{v}_-\otimes\hat{v}_+)$, we have: \begin{equation} \hat{U}_B = \sum_{i,j,k,l = 0}^{d-1} [U_A]_{ij}^{kl} ( \hat{u}_+ \ket{k} \bra{i} \hat{v}_- ) \otimes ( \hat{u}_-\ket{l} \bra{j} \hat{v}_+ ) . \label{eq:U_B} \end{equation} We also explicitly write $\hat{u}_+ = \sum_{i',k'} [u_+]_{i'}^{k'} \ket{k'}\bra{i'}$ and $\hat{v}_+ = \sum_{j',l'} [v_+]_{j'}^{l'} \ket{l'}\bra{j'}$ in the local basis and substitute into Eq. \ref{eq:U_B} to obtain: \begin{equation} \hat{U}_B = \sum_{i,j,k,l} \sum_{k',j'} [U_A]_{ij}^{kl} [u_+]_{k}^{k'} [v_+]_{j'}^j (\ket{k} \bra{i} \hat{v}_- ) \otimes ( \hat{u}_-\ket{l} \bra{j} ) .   \end{equation} In this basis, the dual of $\hat{U}_B$ is the operator obtained by exchanging the first ket and the last bra of this expression: \begin{eqnarray} \tilde{U}_B &=& \sum_{i,j,k,l} \sum_{k',j'} [U_A]_{ij}^{kl} [u_+]_{k}^{k'} [v_+]_{j'}^j (\ket{j} \bra{i} \hat{v}_- ) \otimes ( \hat{u}_-\ket{l} \bra{k} ) \nonumber \\ &=& \sum_{i,j,k,l} [U_A]_{ij}^{kl}  \left(\sum_{j'} [v_+]_{j'}^j \ket{j} \right) \bra{i} \hat{v}_-  \otimes \hat{u}_-\ket{l} \left(\sum_{k'} [u_+]_{k}^{k'} \bra{k} \right) \nonumber \\ &=& (\hat{v}_+^T \otimes \hat{u}_-) \Big\{ \sum_{i,j,k,l = 0}^{d-1} [U_A]_{ij}^{kl} \ket{j} \bra{i} \otimes \ket{l} \bra{k} \Big\} (\hat{v}_- \otimes \hat{u}_+^T) .   \end{eqnarray} Observing that the operator in curly brackets is the dual of $\hat{U}_A$ we have $\tilde{U}_B = (\hat{v}_+^T \otimes \hat{u}_-) \tilde{U}_A (\hat{v}_- \otimes \hat{u}_+^T)$. \qed \end{proof}

\begin{corollary} If $\hat{U}_A$ is dual-unitary then so is $\hat{U}_B = (\hat{u}_+\otimes\hat{u}_-)\hat{U}_A (\hat{v}_-\otimes\hat{v}_+)$. \end{corollary}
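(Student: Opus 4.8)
The plan is to invoke the definition of dual unitarity directly: a two-qudit gate is dual-unitary precisely when both the gate itself and its dual are unitary. Since $\hat{U}_A$ is assumed dual-unitary, both $\hat{U}_A$ and $\tilde{U}_A$ are unitary, so I must verify these same two conditions for $\hat{U}_B$.

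The first condition is immediate. The operators $\hat{u}_+\otimes\hat{u}_-$ and $\hat{v}_-\otimes\hat{v}_+$ are tensor products of single-qudit unitaries, hence unitary, and $\hat{U}_A$ is unitary by hypothesis. As $\hat{U}_B = (\hat{u}_+\otimes\hat{u}_-)\hat{U}_A(\hat{v}_-\otimes\hat{v}_+)$ is a product of three unitaries, it is unitary.

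For the second condition I would appeal to Claim~\ref{claim:local_u}, which supplies the explicit form $\tilde{U}_B = (\hat{v}_+^T \otimes \hat{u}_-)\tilde{U}_A(\hat{v}_-\otimes\hat{u}_+^T)$. Here $\tilde{U}_A$ is unitary by hypothesis, and $\hat{u}_-,\hat{v}_-$ are unitary by assumption, so the only point requiring care is the appearance of the transposes $\hat{v}_+^T$ and $\hat{u}_+^T$. The key elementary fact is that the transpose of a unitary is again unitary: from $\hat{u}\hat{u}^\dagger=\hat{\mathbb{I}}$ one obtains $\overline{\hat{u}}\,\hat{u}^T=\hat{\mathbb{I}}$ by transposing, and this is exactly $(\hat{u}^T)^\dagger\hat{u}^T=\hat{\mathbb{I}}$. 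Thus $\hat{v}_+^T$ and $\hat{u}_+^T$ are unitary, the two flanking tensor-product factors $\hat{v}_+^T\otimes\hat{u}_-$ and $\hat{v}_-\otimes\hat{u}_+^T$ are unitary, and $\tilde{U}_B$ is once more a product of three unitaries, hence unitary.

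Having shown that both $\hat{U}_B$ and $\tilde{U}_B$ are unitary, the gate $\hat{U}_B$ is dual-unitary by definition, which is the claim. I do not anticipate any serious obstacle, since all the structural work is carried out by Claim~\ref{claim:local_u}; the only nontrivial observation is the elementary fact that transposition preserves unitarity, which is precisely what is needed to dispose of the $\hat{u}_+^T$ and $\hat{v}_+^T$ factors that the duality operation produces in the dual of $\hat{U}_B$.
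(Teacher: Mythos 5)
Your proof is correct and follows essentially the same route as the paper's: both rely on Claim~\ref{claim:local_u} for the explicit form $\tilde{U}_B = (\hat{v}_+^T \otimes \hat{u}_-)\tilde{U}_A(\hat{v}_-\otimes\hat{u}_+^T)$ and then use the fact that the transpose of a unitary is unitary to conclude $\tilde{U}_B$ is unitary. Your version merely adds two details the paper leaves implicit --- the unitarity of $\hat{U}_B$ itself and the one-line verification that transposition preserves unitarity --- which is fine.
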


\begin{proof} The transpose of any unitary operator is unitary. So, if $\tilde{U}_A$ is unitary so is $\tilde{U}_B = (\hat{v}_+^T \otimes \hat{u}_-) \tilde{U}_A (\hat{v}_- \otimes \hat{u}_+^T)$. \qed \end{proof}

\begin{claim} \label{claim:SUS} If $\hat{U}$ is dual-unitary then $\hat{S}\hat{U}\hat{S}$ is also dual-unitary. Its dual is $\tilde{U}^T$, where $\tilde{U}$ is the dual of $\hat{U}$.  \end{claim}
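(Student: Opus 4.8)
The plan is to prove the two assertions separately, reducing both to a direct matrix-element computation in the computational basis $\{\ket{i}\}_{i=0}^{d-1}$. I will reuse the key observation from the proof of Claim~\ref{claim:local_u}: writing a two-qudit operator in the standard form $\sum_{ijkl} M_{ij}^{kl}\,\ket{k}\bra{i}\otimes\ket{l}\bra{j}$, its dual is obtained by exchanging the first ket and the last bra, giving $\sum_{ijkl} M_{ij}^{kl}\,\ket{j}\bra{i}\otimes\ket{l}\bra{k}$. The strategy is then: (i) put $\hat{S}\hat{U}\hat{S}$ into this standard form, (ii) read off its dual, (iii) compare with $\tilde{U}^T$, and (iv) deduce dual-unitarity.

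First I would write $\hat{U}=\sum_{ijkl}[U]_{ij}^{kl}\,\ket{k}\bra{i}\otimes\ket{l}\bra{j}$ and conjugate each dyad by $\hat{S}$, using $\hat{S}(\ket{a}\otimes\ket{b})=\ket{b}\otimes\ket{a}$ together with $\hat{S}^\dagger=\hat{S}$, which also gives $(\bra{i}\otimes\bra{j})\hat{S}=\bra{j}\otimes\bra{i}$. This yields $\hat{S}\,(\ket{k}\bra{i}\otimes\ket{l}\bra{j})\,\hat{S}=\ket{l}\bra{j}\otimes\ket{k}\bra{i}$, so that
\begin{equation}
\hat{S}\hat{U}\hat{S}=\sum_{ijkl}[U]_{ij}^{kl}\,\ket{l}\bra{j}\otimes\ket{k}\bra{i}.
\end{equation}
Applying the dual rule (exchange the first ket $\ket{l}$ with the last bra $\bra{i}$) then gives $\widetilde{\hat{S}\hat{U}\hat{S}}=\sum_{ijkl}[U]_{ij}^{kl}\,\ket{i}\bra{j}\otimes\ket{k}\bra{l}$.

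Next I would compute $\tilde{U}^T$ independently: the dual rule gives $\tilde{U}=\sum_{ijkl}[U]_{ij}^{kl}\,\ket{j}\bra{i}\otimes\ket{l}\bra{k}$, and taking the transpose in the computational basis (which acts as $\ket{a}\bra{b}\mapsto\ket{b}\bra{a}$ on each factor) gives $\tilde{U}^T=\sum_{ijkl}[U]_{ij}^{kl}\,\ket{i}\bra{j}\otimes\ket{k}\bra{l}$. This matches the expression found above summand by summand, establishing $\widetilde{\hat{S}\hat{U}\hat{S}}=\tilde{U}^T$. Dual-unitarity is then immediate: $\hat{S}\hat{U}\hat{S}$ is unitary as a product of unitaries, and its dual $\tilde{U}^T$ is unitary because $\tilde{U}$ is unitary (by dual-unitarity of $\hat{U}$) and the transpose of a unitary is unitary (as already noted in the Corollary above); hence $\hat{S}\hat{U}\hat{S}$ is dual-unitary.

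I expect the only delicate point to be the index bookkeeping across the two reshuffling operations, namely the $\hat{S}$-conjugation and the dual, and in particular keeping the transpose in the same computational basis used to define the dual. Once the standard form is fixed these are purely mechanical relabellings, so no genuine obstacle arises; the content of the claim is essentially that the $\hat{S}$-conjugation and the dual operation interact to produce exactly one extra transpose.
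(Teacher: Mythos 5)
Your proof is correct and follows essentially the same route as the paper's: expand $\hat{U}$ in the standard dyadic form, conjugate by $\hat{S}$ to get $\sum_{ijkl}[U]_{ij}^{kl}\,\ket{l}\bra{j}\otimes\ket{k}\bra{i}$, apply the exchange-first-ket-with-last-bra rule, and identify the result with $\tilde{U}^T$, concluding unitarity of the dual from the fact that the transpose of a unitary is unitary. The only cosmetic difference is that you verify the identification with $\tilde{U}^T$ by an independent transpose computation, whereas the paper reads it off directly; the content is identical.
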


\begin{proof}
Write $\hat{U} = \sum_{i,j,k,l = 0}^{d-1} U_{ij}^{kl} \ket{k} \bra{i} \otimes \ket{l} \bra{j}$ and its dual $\tilde{U} = \sum_{i,j,k,l = 0}^{d-1} U_{ij}^{kl} \ket{j} \bra{i} \otimes \ket{l} \bra{k}$. Then we have: \begin{equation} \hat{S}\hat{U}\hat{S} = \sum_{ijkl} U_{ij}^{kl} \ket{l} \bra{j} \otimes \ket{k} \bra{i} . \end{equation} Its dual is obtained by exchanging the first ket and the last bra in this basis: \begin{eqnarray} \widetilde{SUS} &=& \sum_{ijkl} U_{ij}^{kl} \ket{i} \bra{j} \otimes \ket{k} \bra{l} \nonumber \\ &=& \tilde{U}^T . \end{eqnarray} If $\tilde{U}$ is unitary then so is $\tilde{U}^T$, and therefore $\widetilde{SUS}$ is also unitary. \qed
\end{proof}

\begin{claim} \label{claim:SV} The unitary operator $\hat{U} = \hat{S}\hat{V}$ is dual-unitary, and its dual is $\tilde{U} = \hat{U} = \hat{S}\hat{V}$. \end{claim}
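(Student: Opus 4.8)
The plan is to compute the dual of $\hat{U} = \hat{S}\hat{V}$ directly in the local basis and show that it coincides with $\hat{S}\hat{V}$ itself; unitarity of both $\hat{U}$ and its dual then follows at once. The first step I would take is to exploit the block structure of the entangling gate. Because the generator $\sum_j \hat{h}^{(j)}\otimes\ket{j}\bra{j}$ is block-diagonal in the computational basis of the second qudit---the terms for distinct $j$ annihilate one another, since $\ket{j}\bra{j}\,\ket{j'}\bra{j'} = \delta_{jj'}\ket{j}\bra{j}$---the exponential factorizes block by block, yielding $\hat{V} = \sum_{j=0}^{d-1}\hat{w}^{(j)}\otimes\ket{j}\bra{j}$ with $\hat{w}^{(j)} = \exp\{i\hat{h}^{(j)}\}$ a single-qudit unitary. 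Thus $\hat{V}$ is a controlled gate, diagonal in the second qudit.

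Next I would expand $\hat{S}\hat{V}$ in the local basis. Writing $\hat{w}^{(j)} = \sum_{i,k}[w^{(j)}]_i^k\ket{k}\bra{i}$ and using $\hat{S}(\ket{k}\bra{i}\otimes\ket{j}\bra{j}) = \ket{j}\bra{i}\otimes\ket{k}\bra{j}$, I obtain
\begin{equation}
\hat{S}\hat{V} = \sum_{i,j,k}[w^{(j)}]_i^k\,\ket{j}\bra{i}\otimes\ket{k}\bra{j}.
\end{equation}
The crucial observation is that in every term the first ket $\ket{j}$ and the last bra $\bra{j}$ carry the \emph{same} index.

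Finally I would apply the duality rule established in the proof of Claim \ref{claim:local_u}, namely that the dual is obtained by exchanging the first ket with the last bra while keeping the coefficients fixed. Since these two indices coincide, the swap leaves each term untouched, so $\widetilde{SV} = \hat{S}\hat{V} = \hat{U}$. As $\hat{S}$ and $\hat{V}$ are each unitary, $\hat{U} = \hat{S}\hat{V}$ is unitary, and because its dual equals $\hat{U}$ it too is unitary; hence $\hat{U}$ is dual-unitary and $\tilde{U} = \hat{U} = \hat{S}\hat{V}$. This proof is essentially bookkeeping rather than a deep argument, so the main thing to get right is the careful tracking of indices through the definition of the dual; the one place where the structure genuinely enters is that the controlled form of $\hat{V}$ forces the first-ket and last-bra indices to agree, which is precisely what makes the index exchange act trivially. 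The only auxiliary step requiring justification is the block factorization of the exponential, which I would support via the orthogonality of the projectors $\ket{j}\bra{j}$.
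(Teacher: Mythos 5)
Your proof is correct and takes essentially the same route as the paper's: expand $\hat{U}=\hat{S}\hat{V}$ in the local basis using the controlled form $\hat{V}=\sum_{j}e^{i\hat{h}^{(j)}}\otimes\ket{j}\bra{j}$, observe that in every term the first ket and the last bra carry the same index, and conclude that the exchange defining the dual acts trivially, so $\tilde{U}=\hat{U}$. Your explicit justification of the block factorization of the exponential and of the final unitarity step merely spells out details the paper leaves implicit.
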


\begin{proof} We write the swap gate and the entangling gate explicitly in terms of the local basis, $\hat{S} = \sum_{i,j=0}^{d-1} \ket{i} \bra{j} \otimes \ket{j}\bra{i}$ and $\hat{V} = \exp\{ i\sum_{j=0}^{d-1} \hat{h}^{(j)} \otimes \ket{j} \bra{j} \}$. This gives: \begin{equation} \hat{U} = \hat{S}\hat{V} = \sum_{i,j} \ket{i} \bra{j} e^{i\hat{h}^{(i)}} \otimes \ket{j}\bra{i} . \end{equation} Now, in this basis its dual operator is obtained by exchanging the first ket and the last bra: \begin{eqnarray} \tilde{U} &=& \sum_{i,j} \ket{i} \bra{j} e^{i\hat{h}^{(i)}} \otimes \ket{j}\bra{i} \nonumber \\ &=& \hat{U} . \qed \end{eqnarray} \end{proof}

\begin{corollary} \label{cor:U_DU_1} The unitary operator $\hat{U}^\text{DU,1} = (\hat{u}_+ \otimes \hat{u}_-) \hat{S}\hat{V} (\hat{v}_- \otimes \hat{v}_+)$ is dual-unitary and its dual is $\tilde{U}^\text{DU,1} = (\hat{v}_+^T \otimes \hat{u}_-) \hat{S}\hat{V} (\hat{v}_- \otimes \hat{u}_+^T)$. \end{corollary}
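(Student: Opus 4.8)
The plan is to recognize this corollary as an immediate composition of the three preceding results, requiring no new index-level calculation. The key observation is that $\hat{U}^\text{DU,1} = (\hat{u}_+ \otimes \hat{u}_-)\, \hat{S}\hat{V}\, (\hat{v}_- \otimes \hat{v}_+)$ has exactly the form $(\hat{u}_+\otimes\hat{u}_-)\,\hat{U}_A\,(\hat{v}_-\otimes\hat{v}_+)$ treated in Claim \ref{claim:local_u}, with the inner operator identified as $\hat{U}_A = \hat{S}\hat{V}$. Everything then follows by feeding the known facts about $\hat{S}\hat{V}$ through the dressing machinery already established.

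First I would invoke Claim \ref{claim:SV}, which states that $\hat{S}\hat{V}$ is itself dual-unitary and, crucially, is \emph{self-dual}, i.e. $\tilde{U}_A = \hat{S}\hat{V}$; this self-duality is what makes the final expression so clean. For the dual-unitarity of $\hat{U}^\text{DU,1}$ I would then apply the Corollary following Claim \ref{claim:local_u}: since $\hat{U}_A = \hat{S}\hat{V}$ is dual-unitary, so is any local-unitary dressing $(\hat{u}_+\otimes\hat{u}_-)\,\hat{U}_A\,(\hat{v}_-\otimes\hat{v}_+)$, which is precisely $\hat{U}^\text{DU,1}$. To obtain the dual explicitly, I would apply Claim \ref{claim:local_u} directly, giving $\tilde{U}_B = (\hat{v}_+^T \otimes \hat{u}_-)\,\tilde{U}_A\,(\hat{v}_- \otimes \hat{u}_+^T)$, and substitute $\tilde{U}_A = \hat{S}\hat{V}$ to recover $\tilde{U}^\text{DU,1} = (\hat{v}_+^T \otimes \hat{u}_-)\,\hat{S}\hat{V}\,(\hat{v}_- \otimes \hat{u}_+^T)$, matching the claimed formula.

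The only real \textbf{obstacle} is bookkeeping: making sure the single-qudit factors land in the correct tensor slots and that the transposes attach to the correct operators (namely $\hat{v}_+$ and $\hat{u}_+$, not $\hat{v}_-$, $\hat{u}_-$), since Claim \ref{claim:local_u} interchanges the roles of the left-acting and right-acting dressings under dualization while also transposing them. Because the genuine analytic work — the index-level verification of the dual of a dressed operator in Claim \ref{claim:local_u}, and the self-duality of $\hat{S}\hat{V}$ in Claim \ref{claim:SV} — has already been done, no new difficulty remains and the corollary is a one-line consequence.
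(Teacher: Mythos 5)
Your proposal is correct and coincides exactly with the paper's own (one-line) proof, which derives the corollary from Claim \ref{claim:local_u} applied with $\hat{U}_A = \hat{S}\hat{V}$ together with the self-duality of $\hat{S}\hat{V}$ established in Claim \ref{claim:SV}. Your extra care about which factors acquire transposes and swap tensor slots is exactly the right bookkeeping point, and nothing further is needed.
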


\begin{proof} This follows from Claim \ref{claim:local_u} and Claim \ref{claim:SV}. \end{proof}

\begin{corollary} \label{cor:U_DU_2} The unitary operator $\hat{U}^\text{DU,2} = \hat{S} \hat{U}^{DU,1} \hat{S}$ is dual-unitary. Its dual is $(\tilde{U}^{DU,1})^T = (\hat{v}_-^T \otimes \hat{u}_+)\hat{V}^T \hat{S} (\hat{v}_+ \otimes \hat{u}_-^T)$.
\end{corollary}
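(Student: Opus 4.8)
The plan is to obtain both assertions almost for free from the results already established, so that the only genuine computation is a single transpose. For the dual-unitarity claim, Corollary \ref{cor:U_DU_1} tells us that $\hat{U}^\text{DU,1}$ is dual-unitary, and Claim \ref{claim:SUS} states that conjugating any dual-unitary gate by the SWAP operator preserves dual-unitarity and sends its dual to the transpose of that dual. Since $\hat{U}^\text{DU,2} = \hat{S}\,\hat{U}^\text{DU,1}\,\hat{S}$ is precisely such a conjugation, it is immediately dual-unitary with dual equal to $(\tilde{U}^\text{DU,1})^T$. This settles the first sentence of the statement and reduces the second to evaluating that transpose explicitly.

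For the explicit form I would start from the expression for the dual of $\hat{U}^\text{DU,1}$ supplied by Corollary \ref{cor:U_DU_1}, namely $\tilde{U}^\text{DU,1} = (\hat{v}_+^T \otimes \hat{u}_-)\,\hat{S}\hat{V}\,(\hat{v}_- \otimes \hat{u}_+^T)$, and take its transpose. The key algebraic facts are that transposition reverses the order of a product, that it acts factorwise on a tensor product, $(\hat{A}\otimes\hat{B})^T = \hat{A}^T \otimes \hat{B}^T$ (with no internal exchange of the two slots), that $(\hat{A}^T)^T = \hat{A}$, and that the SWAP operator is symmetric, $\hat{S}^T = \hat{S}$. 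Applying these, the trailing factor $(\hat{v}_- \otimes \hat{u}_+^T)$ moves to the front and becomes $\hat{v}_-^T \otimes \hat{u}_+$, the middle becomes $(\hat{S}\hat{V})^T = \hat{V}^T\hat{S}^T = \hat{V}^T\hat{S}$, and the leading factor $(\hat{v}_+^T \otimes \hat{u}_-)$ moves to the back and becomes $\hat{v}_+ \otimes \hat{u}_-^T$. Collecting these yields $(\tilde{U}^\text{DU,1})^T = (\hat{v}_-^T \otimes \hat{u}_+)\,\hat{V}^T\hat{S}\,(\hat{v}_+ \otimes \hat{u}_-^T)$, which matches the claimed expression.

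There is no substantial obstacle here; the argument is purely bookkeeping. The only place where care is required is in the transposition step, where it is easy to conflate two distinct operations: the order-reversal of the overall product on the one hand, versus the factorwise action on the tensor product on the other, which does \emph{not} interchange the two single-qudit slots. One must also remember that $\hat{S}$ is symmetric in the chosen local basis, a one-line check $\hat{S}^T = \sum_{ij}\ket{j}\bra{i} \otimes \ket{i}\bra{j} = \hat{S}$ after relabeling $i \leftrightarrow j$, so that the SWAP reappearing in the middle of the final expression is genuinely $\hat{S}$ and not a reordered operator. With those two points handled correctly, the identity follows directly.
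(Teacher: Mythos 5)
Your proof is correct and follows exactly the paper's route: the paper also derives Corollary \ref{cor:U_DU_2} by combining Corollary \ref{cor:U_DU_1} with Claim \ref{claim:SUS}, stating the explicit form without further computation. Your second paragraph merely spells out the transpose bookkeeping (order reversal, factorwise action on tensor products, and $\hat{S}^T = \hat{S}$) that the paper leaves implicit, and that calculation is carried out correctly.
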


\begin{proof} This follows from Corollary \ref{cor:U_DU_1} and Claim \ref{claim:SUS} \end{proof}

\begin{claim} 
The parameterisation in Eq. \eqref{eq:new_DU_1} includes all DU gates in the case of local Hilbert space dimension $d = 2$.
\end{claim}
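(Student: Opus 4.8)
The plan is to reduce an arbitrary two-qubit dual-unitary gate to its canonical (Cartan/KAK) form and then identify the canonical core with the gate $\hat{S}\hat{V}$. First I would invoke the KAK decomposition of $U(4)$: up to a global phase, every two-qubit gate factorizes as
\[ \hat{U} = (\hat{u}_1\otimes\hat{u}_2)\,\hat{U}_0\,(\hat{v}_1\otimes\hat{v}_2), \qquad \hat{U}_0 = \exp\!\big[i(c_x\,\sigma_x\otimes\sigma_x + c_y\,\sigma_y\otimes\sigma_y + c_z\,\sigma_z\otimes\sigma_z)\big], \]
with single-qubit unitaries $\hat{u}_i,\hat{v}_i$ and real parameters $c_\alpha$. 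By the Corollary to Claim \ref{claim:local_u}, dual-unitarity is invariant under dressing by local unitaries, and global phases are irrelevant, so $\hat{U}$ is dual-unitary if and only if its core $\hat{U}_0$ is. It therefore suffices to determine which cores are dual-unitary and to show that each such core equals $\hat{S}\hat{V}$ up to local unitaries and a phase.

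Next I would impose dual-unitarity directly on $\hat{U}_0$. Because the three terms $\sigma_\alpha\otimes\sigma_\alpha$ commute pairwise, $\hat{U}_0$ is diagonal in the Bell basis, which makes the index reshuffling that defines the dual explicit; writing out $\tilde{U}_0$ and demanding $\tilde{U}_0 \tilde{U}_0^\dagger = \hat{\mathbb{I}}$ should reduce to a system of trigonometric equations in $(c_x,c_y,c_z)$ whose solution set is the locus where two of the three parameters equal $\pi/4$ (in the standard Weyl-chamber convention). I expect this to be the main obstacle: disentangling signs and the $\pi$-periodicity of each factor, so as to show that precisely the ``two equal to $\pi/4$'' family survives, requires the most care. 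If one prefers, this step can instead be quoted from the known classification of dual-unitary qubit gates.

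Finally, assuming two parameters equal $\pi/4$, I would apply single-qubit Clifford rotations identically to both qubits---these permute the Pauli axes and are reabsorbed into the local factors---so as to place the two special parameters on the $x$ and $y$ axes, giving $c_x=c_y=\pi/4$ with $c_z$ free. Using the identity $\exp[i\tfrac{\pi}{4}(\sigma_x\otimes\sigma_x+\sigma_y\otimes\sigma_y+\sigma_z\otimes\sigma_z)] = e^{i\pi/4}\hat{S}$ together with the commutativity of the three terms, the core factorizes as $\hat{U}_0 = e^{i\pi/4}\,\hat{S}\,\exp[i(c_z-\tfrac{\pi}{4})\,\sigma_z\otimes\sigma_z]$. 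The remaining factor is exactly a gate $\hat{V}$ of the form in Eq. \eqref{eq:V} for $d=2$: choosing $\hat{h}^{(0)} = (c_z-\tfrac{\pi}{4})\sigma_z$ and $\hat{h}^{(1)} = -(c_z-\tfrac{\pi}{4})\sigma_z$ gives $\hat{h}^{(0)}\otimes\ket{0}\bra{0} + \hat{h}^{(1)}\otimes\ket{1}\bra{1} = (c_z-\tfrac{\pi}{4})\,\sigma_z\otimes\sigma_z$. Reinstating the local unitaries and absorbing the residual global phase into one of them (the single-qubit factors range over all of $U(2)$, since the exponential map from Hermitian matrices is surjective), I obtain $\hat{U} = (\hat{u}^+\otimes\hat{u}^-)\,\hat{S}\,\hat{V}\,(\hat{v}^-\otimes\hat{v}^+)$, which is precisely the form of Eq. \eqref{eq:new_DU_1}. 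This completes the plan.
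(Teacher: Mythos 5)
Your proof is correct and follows essentially the same route as the paper: both arguments rest on the known classification of two-qubit dual-unitary gates (the canonical form with $c_x=c_y=\pi/4$ and $c_z$ free, which you rightly note can simply be quoted --- the paper quotes it from Ref.~\cite{Ber-19a} rather than re-deriving the trigonometric conditions), and then exhibit the canonical core as $\hat{S}\hat{V}$ with the local unitaries and the global phase absorbed into the arbitrary single-qubit factors $\hat{u}^\pm$, $\hat{v}^\pm$. The only difference is in the final rewriting: you factor the core directly via the identity $\exp[i\tfrac{\pi}{4}(\hat{\sigma}_x\otimes\hat{\sigma}_x+\hat{\sigma}_y\otimes\hat{\sigma}_y+\hat{\sigma}_z\otimes\hat{\sigma}_z)]=e^{i\pi/4}\hat{S}$ together with the commuting residual factor $\exp[i(c_z-\tfrac{\pi}{4})\hat{\sigma}_z\otimes\hat{\sigma}_z]$, realized as a gate of the form in Eq.~\eqref{eq:V} by the symmetric choice $\hat{h}^{(0)}=-\hat{h}^{(1)}=(c_z-\tfrac{\pi}{4})\hat{\sigma}_z$, whereas the paper imports the equivalent form $\hat{S}(\hat{C}_z)^\alpha$ from Ref.~\cite{Suz-22} and takes $\hat{h}^{(0)}=0$, $\hat{h}^{(1)}=\mathrm{diag}(0,\alpha\pi)$ --- the two choices differ only by single-qubit $Z$-rotations and a phase, so the arguments are interchangeable.
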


\begin{proof}
    For $d = 2$, any DU gate can be written in the form \cite{Ber-19a}:
    \begin{equation}
        \hat{U}^{\text{DU}} = e^{i \phi '} (\hat{u}^+ \otimes \hat{u}^-) e^{-i \frac{\pi}{4}(\hat{\sigma}_x \otimes \hat{\sigma}_x + \hat{\sigma}_y \otimes \hat{\sigma}_y + J \hat{\sigma}_z \otimes \hat{\sigma}_z)} (\hat{w}^- \otimes \hat{w}^+).
    \end{equation}
    Furthermore, it can also be rewritten as \cite{Suz-22}:
    \begin{equation} \label{eq:complete_DU_d_2}
        \hat{U}^{\text{DU}} = e^{i \phi} (\hat{u}^+ \otimes \hat{u}^-) \hat{S} (\hat{C}_z) ^{\alpha} (\hat{v}^- \otimes \hat{v}^+),
    \end{equation}
    where the parameters and single-qubit unitaries are related by: $\phi = \phi ' + \frac{\pi}{4} \alpha$, $\alpha = j - 1$, $\hat{v}^- = e^{-i \frac{\pi}{4} \alpha \hat{\sigma}_z}$, $\hat{v}^+ = e^{-i \frac{\pi}{4} \alpha \hat{\sigma}_z}$. The $\hat{C}_z$ operator represents the controlled-Z gate, which can be written as a diagonal matrix in the computational basis: $\hat{C}_z = \text{diag}(1, 1, 1, -1)$. Therefore we can rewrite $(\hat{C}_z)^{\alpha}$ as: 
    \begin{equation}
        (\hat{C}_z)^{\alpha} = \text{exp}[\alpha \text{ln}(\hat{C}_z)] = \text{exp}[ \text{diag}(0, 0, 0, \alpha \pi i)] = \text{exp}[\text{diag} (0, \alpha \pi i) \otimes \ket{1} \bra{1}].
    \end{equation}
    Therefore it is easy to see that the complete parameterisation from \eqref{eq:complete_DU_d_2} is included in the form in \eqref{eq:new_DU_1}, with $\hat{h}^{(0)}_{i,j} = 0$ and $\hat{h}^{(1)} = \text{diag}(0, \alpha \pi i)$.
\end{proof}
%\section{Construction of Quantum Many Body Scars}\label{appendix:construction_of_qmbs}

\section{Ergodic Properties of Presented Models}\label{appendix:ergodic_properties}

The ergodic properties of a quantum many-body system can be characterized by the behaviour of its dynamical correlation functions \cite{Arn-89a}. For DU circuits, in particular, it has been shown that the dynamical correlation functions can be calculated exactly \cite{Ber-19a}, and are directly related to the spectral properties of a completely-positive and trace-preserving map $\mathcal{M}_{\nu}$, where $\nu = \pm$ denotes the direction along which we are computing the correlation function. For our circuit setup, with DU gates $\hat{U}_e^\text{DU}$ making up the even layer and DU gates $\hat{U}_o^\text{DU}$ making up the odd layer, the map can be written as $\mathcal{M}_{\nu} = \mathcal{M}_{\nu, o} \mathcal{M}_{\nu, e}$, with:
\begin{align}
    \mathcal{M}_{+,e/o} (\hat{o}) \equiv \frac{1}{d} \Tr_1 [(\hat{U}^{\text{DU}}_{e/o})^\dagger (\hat{o} \otimes \hat{\mathbb{I}}) \hat{U}^\text{DU}_{e/o}], \\
    \mathcal{M}_{-,e/o} (\hat{o}) \equiv \frac{1}{d} \Tr_2 [(\hat{U}^{\text{DU}}_{e/o})^\dagger (\hat{\mathbb{I}} \otimes \hat{o}) \hat{U}^\text{DU}_{e/o}],
\end{align}
where $\hat{o}$ is an arbitrary local observable and $\Tr_{1/2}$ is a partial trace over the left/right qudit \cite{Fri-21a}. 

The mathematical properties of $\mathcal{M}_\nu$ guarantee that all of its eigenvalues lie on or within a unit circle, centered at the origin of the complex plane. The single qudit identity operator $\hat{o} = \hat{\mathbb{I}}$ is always an eigenoperator of $\mathcal{M}_{\nu}$ with unit eigenvalue, i.e., $\mathcal{M}_\pm (\hat{\mathbb{I}}) = \hat{\mathbb{I}}$. If there is one or more additional eigenoperator of $\mathcal{M}_\pm$ with unit eigenvalue then the circuit is \emph{non-ergodic}, since these additional eigenoperators correspond to non-decaying dynamical correlation functions. On the other hand, if the single-qudit identity operator is the only eigenvector of $\mathcal{M}_\pm$ with unit eigenvalue then the circuit is \emph{ergodic} and all time-averaged dynamical correlation functions approach their thermal values at large times. If, in addition, there are no other eigenvalues of $\mathcal{M}_\pm$ with unit modulus, then the circuit is also \emph{mixing}.

\begin{figure}
    \includegraphics[width=\textwidth]{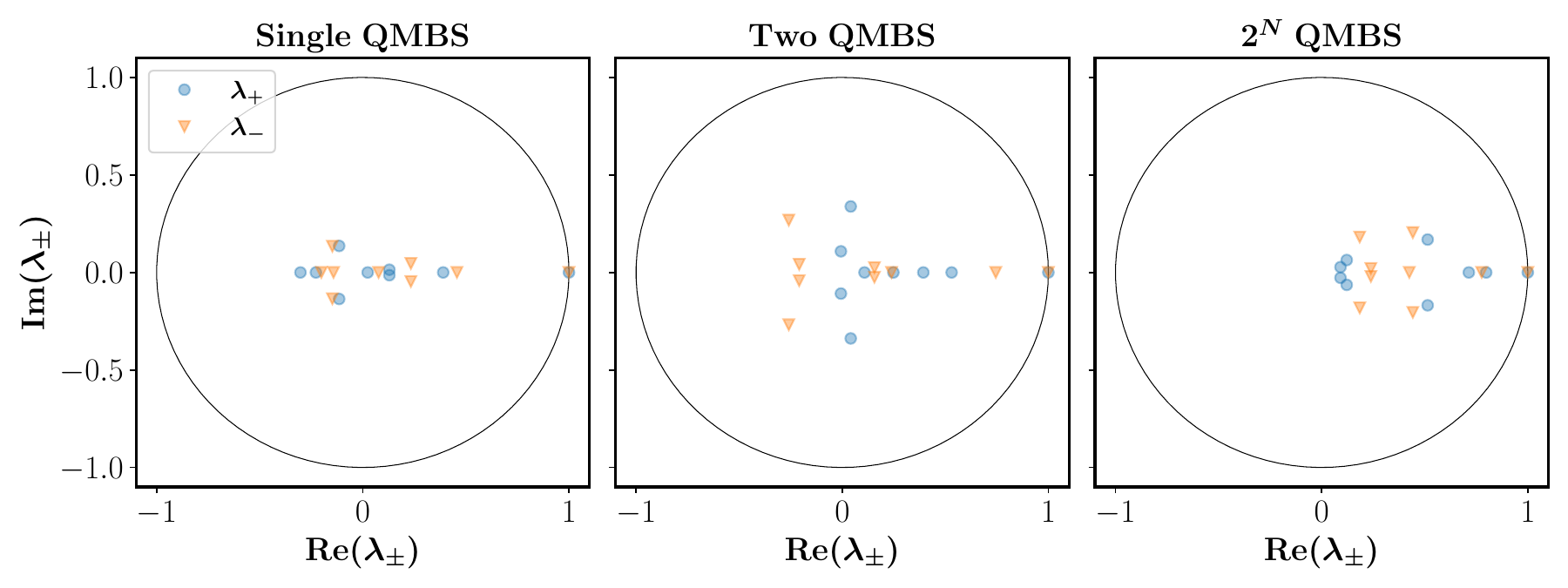}
    \caption{Eigenvalues of the $\mathcal{M}_{\pm}$ map for the three models presented in this work. For all of the models we can observe exactly one eigenvalue of $\lambda_{\pm} = 1.0$, corresponding to the identity operator $\hat{o} = \hat{\mathds{1}}$. All of the other eigenvalues of $\mathcal{M}_{\pm}$ satisfy $|\lambda_{\pm}| < 1$, thus demonstrating that the presented models fall within the ergodic and mixing class according to the classification presented in \cite{Ber-19a}.}
    \label{fig:M_map}
\end{figure}

%Our models can be made either ergodic or non-ergodic...

For the DU circuit setup in this work we have chosen the even layer DU gates to be of the form $\hat{U}_e^\text{DU} = \hat{U}^{DU, 1}$ (c.f. Eq. \ref{eq:new_DU_1}) and the odd layer DU gates to be of the form $\hat{U}_o^\text{DU} = \hat{U}^{DU, 2}$ (c.f. Eq. \ref{eq:new_DU_2}). With this choice we obtain a map $\mathcal{M}_{\pm}$ that has a single eigenvalue with unit modulus, corresponding to the identity eigenoperator, and therefore the DU circuit will display ergodic and mixing behaviour according to the classification described above. This is shown numerically in Fig. \ref{fig:M_map} where we plot the eigenvalues $\lambda_\pm$ of $\mathcal{M}_\pm$ for each of our three examples (example A and example B in the main text, as well as the two QMBS example in section \ref{appendix:two_qmbs}). [We have observed that the circuit is also ergodic and mixing if we choose $\hat{U}_e^\text{DU} = \hat{U}^{DU, 2}$ and $\hat{U}_o^\text{DU} = \hat{U}^{DU, 1}$.] 

However, it is also possible to embed QMBS in DU circuits that are \emph{non-ergodic}. This can be achieved by choosing even and odd layer gates to be the same [i.e., $\hat{U}_e^\text{DU} = \hat{U}_o^\text{DU} = \hat{U}^{DU, 1}$ or $\hat{U}_e^\text{DU} = \hat{U}_o^\text{DU} = \hat{U}^{DU, 2}$]. In this case the $\mathcal{M}_{\pm}$ maps will have 2 unit eigenvalues, and thus the DU circuit will be non-ergodic by the classification described above.

\section{Explicit construction} \label{appendix:numerical_demonstrations}

We here explicitly describe a construction of dual-unitary gates (defined by the $d \times d$ Hermitian matrices $\{ \hat{f}^\pm, \hat{g}^\pm, \hat{h}^{(j)} \}$) satisfying conditions \eqref{eq:C1}--\eqref{eq:C3}. We restrict our considerations to projectors of the form \begin{equation} \hat{P}_{n,n+1} = \hat{\mathbb{I}}_{n,n+1} - \sum_{x\in\mathcal{X}} \ket{x}\bra{x}_{n,n+1} , \label{eq:X} \end{equation} where $\mathcal{X}$ is some subset of products of computational basis states, i.e., $\ket{x}_{n,n+1} = \ket{x'}_n\otimes\ket{x''}_{n+1}$ where $\ket{x'}, \ket{x''} \in \{ \ket{i} \}_{i=0}^{d-1}$. For projectors of this form, conditions \eqref{eq:C1} and \eqref{eq:C2} can be satisfied by choosing the single-qudit Hermitian operators $\hat{f}^\pm$ and $\hat{g}^\pm$ such that $\hat{f}^+ \ket{x'} = \hat{g}^- \ket{x'} = 0$ and $\hat{f}^- \ket{x''} = \hat{g}^+ \ket{x''} = 0$, for all $\ket{x}=\ket{x',x''}\in\mathcal{X}$. This corresponds to setting some rows and columns of these matrices to zero (in the computational basis $\{ \ket{i} \}_{i=0}^{d-1}$). All of the other matrix elements can be chosen arbitrarily. It still remains to constuct an entangling gate $\hat{V}_{n,n+1} = \exp\{ \sum_{j=0}^{d-1} \hat{h}^{(j)} \otimes \ket{j}\bra{j} \}$ that satisfies condition \eqref{eq:C3}.
First, consider the two-qudit operator $\hat{\mathsf{H}}_{n,n+1} = \sum_{j=0}^{d-1} \hat{h}^{\prime (j)} \otimes \ket{j}\bra{j}$ where $\hat{h}^{\prime(j)}$ are arbitrary single-qudit Hermitian operators. This operator does not necessarily satisfy \eqref{eq:C3}. However, the projected operator $\hat{H}_{n,n+1} = \hat{P}_{n,n+1} \hat{\mathsf{H}}_{n,n+1} \hat{P}_{n,n+1}$ is still of the form $\hat{H}_{n,n+1} = \sum_{j=0}^{d-1} \hat{h}^{(j)} \otimes \ket{j}\bra{j}$ and is guaranteed to satisfy the condition \eqref{eq:C3}, since we have $\hat{P}_{n,n+1} \hat{H}_{n,n+1} \hat{P}_{n,n+1} = \hat{H}_{n,n+1}$. 

In all of  our numerical examples we construct a set of Hermitian generators $\{\hat{f}^{\pm}, \hat{g}^{\pm}, \hat{h}^{(j)}\}$ by first generating random matrices with elements $m_{j,k} = \alpha_{j,k} + i\beta_{j, k}$, where $\alpha_{j,k}, \beta_{j, k}$ are sampled from a uniform distribution. Random Hermitian matrices are then obtained as, e.g., $\hat{h} = \hat{m} + \hat{m}^{\dagger}$. Finally, we impose the conditions \eqref{eq:C1}, \eqref{eq:C2}, \eqref{eq:C3} in the manner described above. This procedure is used in the following three examples:

\begin{enumerate}[label=(\Alph*)] 
\item \label{proj: one scar} $\mathcal{X} = \{ \ket{0,0} \}$, with the target space $\mathcal{T} = \{ \ket{0}^{\otimes N} \}$ consisting of a single QMBS. The corresponding results can be found in \textbf{\emph{example A}} in the main text, and in Fig. \ref{fig:QMBS_in_DUC}.

\item \label{proj: exp scars} $\mathcal{X} = \{ \ket{0,0}, \ket{d-1,d-1}, \ket{0,d-1}, \ket{d-1,0} \}$, in this case the target space is $\mathcal{T} = \text{span}\{ \ket{0}, \ket{d-1} \}^{\otimes N}$, which has dimension $\text{dim}(\mathcal{T}) = 2^N$. This gives an exponentially growing number of QMBS, though still an exponentially small fraction $(2/d)^N$ of all eigenstates for $d>2$. The corresponding results can be found in \textbf{\emph{example B}} in the main text, and in Fig. \ref{fig:QMBS_in_DUC}. We note that one can in principle also embed a number of scars which scales exponentially, but with a different base $k$ (with $k < d$). One can achieve this by setting $\mathcal{X} = \{ \ket{x} \otimes \ket{y}: x, y = 0, 1, ..., k-1 \}$. This would result in the target space $\mathcal{T} = \text{span}\{\ket{0}, \ket{1}, ..., \ket{k - 1} \}^{\otimes N}$, ie. the number of scars would be $k^N$ and still an exponentially small fraction $(k/d)^N$ of all eigenstates.

\item \label{proj: two scars} $\mathcal{X} = \{ \ket{0,0}, \ket{d-1,d-1} \}$, with the corresponding target space $\mathcal{T} = \text{span}\{ \ket{0}^{\otimes N} , \ket{d-1}^{\otimes N} \}$, containing two QMBS. The results are presented in section \ref{appendix:two_qmbs}.
\end{enumerate} 

\section{Breaking QMBS eigenphase degeneracy}
\label{app:QMBS_degeneracy}

In examples with multiple QMBS it is possible that they can have a degeneracy in the eigenphase $\varphi_\alpha$ (where $\hat{\mathbb{U}} \ket{\varphi_\alpha} = e^{i\varphi_\alpha} \ket{\varphi_\alpha}$). To determine the different possible values of the eigenphase for QMBS we observe that for QMBS we have $\hat{\mathbb{U}}\ket{\varphi_\alpha} = \hat{\mathbb{S}} \ket{\varphi_\alpha} = e^{i\varphi_\alpha} \ket{\varphi_\alpha}$ and that $\hat{\mathbb{S}}^{N/2} = \hat{\mathbb{I}}$. Hence:

\begin{eqnarray}
    \hat{\mathbb{S}}^{N / 2} \ket{\varphi _{\alpha}} &=& \exp\{i N \varphi _{\alpha} /2 \} \ket{\varphi _{\alpha}} = \ket{\varphi _{\alpha}}  \\
    &\implies& e^{i N \varphi _{\alpha} /2} = 1 \\ &\implies& \varphi _{\alpha} = \frac{4 \pi k}{N}, 
\end{eqnarray}
where $k \in \{ -\lfloor\frac{N}{4}\rfloor, -\lfloor\frac{N}{4}\rfloor + 1, \hdots, \lfloor\frac{N}{4}\rfloor - 1  \}$. Since there are only $N/2$ distinct values of the eigenphase there must be degeneracies when there are more than $N/2$ QMBS, for instance, in examples where we have an exponenential number $2^N$ of QMBS.

However, some degeneracies in the QMBS in DU circuits can be broken by adding an additional unitary layer $\hat{\mathbb{U}}'$ at each timestep of the circuit with the properties $[ \hat{\mathbb{S}}, \hat{\mathbb{U}}'] = 0 $ and $[\hat{\mathbb{P}}_{n,n+1}, \hat{\mathbb{U}}'] = 0$ for all $n$. These properties ensure that the projectors and the swap layers share a set of simultaneous eigenstates with $\hat{\mathbb{U}}'$, including the QMBS in $\mathcal{K}$. If $\hat{\mathbb{U}}'$ can be constructed as a product of one-qudit gates then these gates can be absorbed into the even or odd layer unitaries $\hat{\mathbb{U}}_{e/o}$ without disturbing the brickwork structure of the circuit and without impacting dual unitarity. This approach to break the QMBS eigenphase degeneracy is used in the example in section \ref{appendix:two_qmbs}, where we embed two QMBS in a DU circuit.

\section{An example with two QMBS} \label{appendix:two_qmbs}

In order to embed two QMBS into a DU circuit, we choose the following set of projectors:
\begin{equation}
    \hat{P}_{n, n+1} = \hat{\mathds{1}}_{n, n+1} - \ket{0}\bra{0}_n \otimes \ket{0}\bra{0}_{n + 1} - \ket{d-1}\bra{d-1}_n \otimes \ket{d-1}\bra{d-1}_{n + 1},
\end{equation}
which will lead to the common kernel $\mathcal{K} = \{\ket{0}^{\otimes N}, \ket{d-1}^{\otimes N} \}$. Since any state of the form $\ket{i}^{\otimes N}$ is invariant under the application of a layer of swap gates $\hat{\mathbb{S}}_{e/o}$, the target space will be the same as the kernel, $\mathcal{T} = \mathcal{K}$. In order to construct the DU gate generators $\{\hat{f}^\pm , \hat{g}^\pm, \hat{h}^{(j)} \}$ we follow the procedure described in section \ref{appendix:numerical_demonstrations}, guaranteeing that they will obey the required conditions \eqref{eq:C1}, ~\eqref{eq:C2}, ~\eqref{eq:C3}. Once the generators are constructed, we numerically calculate the same quantities as for examples A and B in the main text. The result is a circuit with the two QMBS $\hat{\mathbb{U}} \ket{0}^{\otimes N} = \ket{0}^{\otimes N}$ and $\hat{\mathbb{U}} \ket{d-1}^{\otimes N} = \ket{d-1}^{\otimes N}$ that have the same eigenphase $\varphi = 0$.

However, we also break the QMBS eigenphase degeneracy by adding an additional layer of single qudit unitaries $\hat{\mathbb{U}}\to \hat{\mathbb{U}}\hat{\mathbb{U}}'$, where $\hat{\mathbb{U}}' = \hat{u}_\phi^{\otimes N}$ and $\hat{u}_{\phi} = \sum_{j=0}^{d-1} \exp \{-i\phi \frac{2j-(d-1)}{d-1} \} |j\rangle\langle j |$ are generalised Pauli $Z$-rotations (see Appendix \ref{app:QMBS_degeneracy} above). This additional layer of single-qudit unitaries will add a phase of $e^{iN \phi}$ to the $\ket{0}^{N}$ state, whereas the other QMBS state, $\ket{d-1}^N$, will acquire an eigenphase of $e^{-iN \phi}$. In our simulations we set $\phi = 0.01$. 

The numerical results are shown in Fig. \ref{fig:two_QMBS_in_DUC}. The entanglement growth (Fig \ref{fig:two_QMBS_in_DUC} - centre) and fidelity evolution (Fig. \ref{fig:two_QMBS_in_DUC} - right) are quantitatively very close to the ones obtained for the single QMBS case (\ref{fig:QMBS_in_DUC} - b, c). The only major difference can be observed in the plot of entanglement entropies and eigenphases of the individual eigenstates (Fig. \ref{fig:two_QMBS_in_DUC} - left), where we now observe two QMBS at $0$ entanglement entropies, both having a non-zero eigenphases due to the choice of phase degeneracy breaking single-qudit unitaries.

%This can also be extended to higher dimensional local Hilbert spaces.

%TODO: Insert figure and describe results

\begin{figure}
    \includegraphics[width=\textwidth]{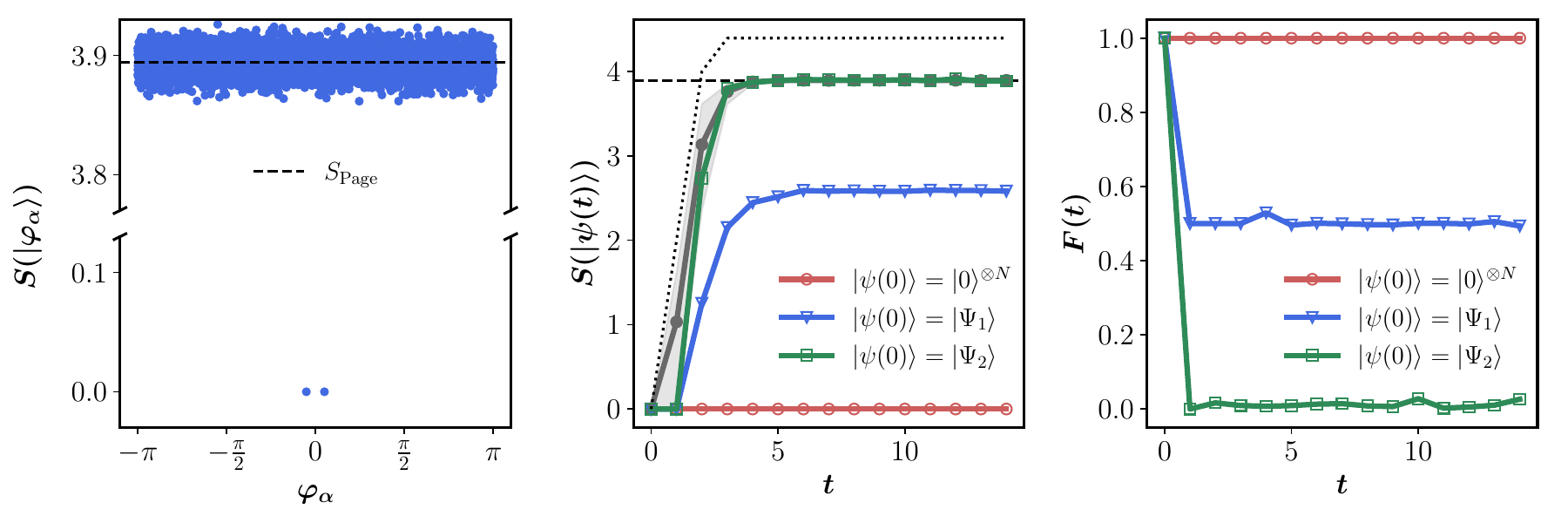}
    \caption{An example illustrating out construction with two QMBS. (Left) The entangement entropy and eigenphases of individual eigenstates of the Floquet operator. Most of the spectrum has an entanglement entropy close to the Page value, while the QMBS states, $\{\ket{0}^{\otimes N}, \ket{d-1}^{\otimes N} \}$ are at $0$ entanglement entropy and have non-zero phases, due to the chosen single-qudit unitaries. (Centre) Entanglement entropy growth for the set of initial states as used iA two-qudit unitary:n Fig. \ref{fig:QMBS_in_DUC}, and we observe quantitatively similar results as in example A. (Right) Time evolution of fidelity with respect to the initial state.  [Parameters: $N=8$, $d=3$]}
    \label{fig:two_QMBS_in_DUC}
\end{figure}

\section{Dual quantum many-body scars}
\label{app:dual_QMBS}

The Floquet unitary operator $\hat{\mathbb{U}}$ propagates the $N$-qudit system forward one timestep, while the $\tau$'th power $\hat{\mathbb{U}}^\tau$ propagates the system $\tau$ steps forward in time. However, since our circuit is DU, we can also define a \emph{dual} Floquet unitary operator $\tilde{\mathbb{U}}$, which propagates a $\tau$-depth circuit one step forward in the spatial direction. This raises the question: if the Floquet unitary $\hat{\mathbb{U}}$ hosts QMBS, does the dual Floquet unitary $\tilde{\mathbb{U}}$ also host QMBS? In this section we show numerically that for example A and example B in the main text the QMBS in $\hat{\mathbb{U}}$ have counterpart \emph{dual} QMBS in the dual operator $\tilde{\mathbb{U}}$.

We construct the dual Floquet unitary as the product of an odd and even layer of dual two-qudit gates, $\tilde{\mathbb{U}} = \tilde{\mathbb{U}}_o \tilde{\mathbb{U}}_e$ where $\tilde{\mathbb{U}}_e = \bigotimes_{t=0}^{\tau - 1} \tilde{U}^\text{DU,1}_{t,t+\frac{1}{2}}$ and $\tilde{\mathbb{U}}_o = \bigotimes_{t=0}^{\tau - 1} \tilde{U}^\text{DU,2}_{t+\frac{1}{2}, t+1}$, and we assume periodic boundary conditions (in time). Here, $\tilde{U}^\text{DU, 1}_{t, t + \frac{1}{2}}$ is the dual of the two-qudit gate $\hat{U}^\text{DU, 1}$ in the even layer at timestep $t$, while $\tilde{U}^\text{DU,2}_{t+\frac{1}{2},t+1}$ is the dual of the two-qudit gate $\hat{U}^\text{DU,2}$ in the odd layer at timestep $t$ (a single timestep includes both an even and an odd layer). In Fig. \ref{fig:dual_QMBS} we plot the bipartite entanglement entropy of the eigenstates $\tilde{\mathbb{U}} \ket{\tilde{\varphi}_\alpha} = e^{i \tilde{\varphi}} \ket{\tilde{\varphi}_\alpha}$. We observe that the bulk of the eigenstates are thermal, while there is a single dual-QMBS (for example A) or $2^{2\tau}$ dual-QMBS (for example B) with zero or low entanglement.

\begin{figure}
\includegraphics[width=0.45\textwidth]{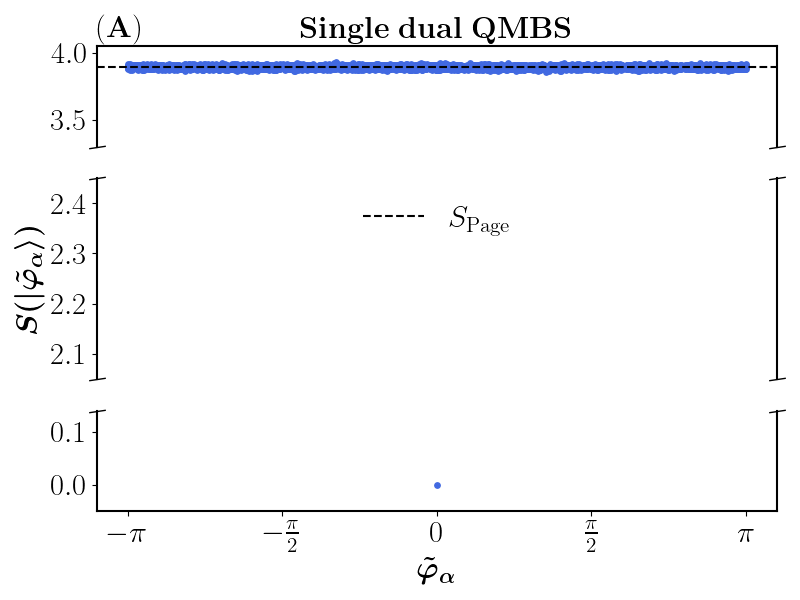}
\includegraphics[width=0.45\textwidth]{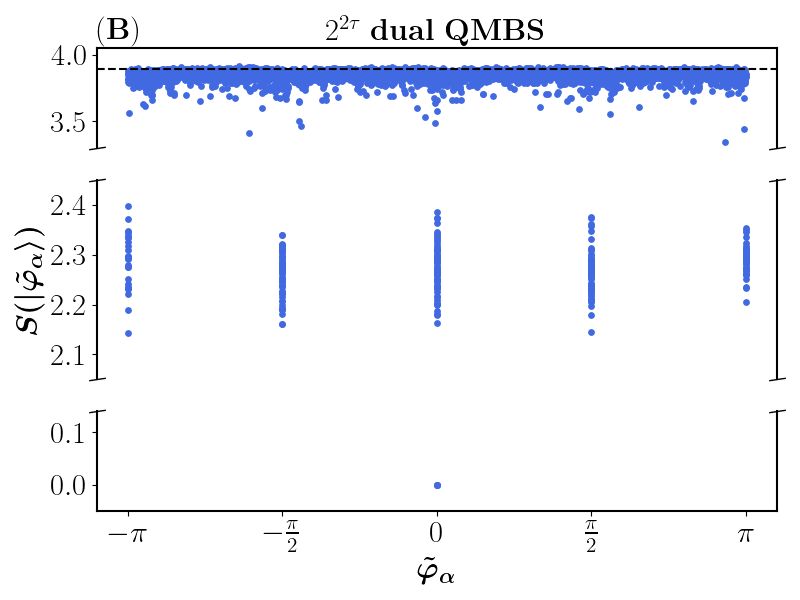}
    \caption{The bipartite entanglement entropy of the eigenstates $\ket{\tilde{\varphi}_\alpha}$ of the dual Floquet unitary $\tilde{\mathbb{U}}$. Both example A (left) and example B (right) from the main text reveal \emph{dual}-QMBS -- counterparts to the standard QMBS in each example.  [Parameters: $\tau=4$, $d=3$]}
    \label{fig:dual_QMBS}
\end{figure}

\section{Upper bound for QMBS entanglement}
\label{app:QMBS_entanglement_bound}

One of the characteristics of QMBS is that they have a sub-volume law scaling of bipartite entanglement entropy \cite{Ser-21, Mou-22a, pap-2022}, in contrast to thermal eigenstates, which have a volume-law scaling. In this section we show that the QMBS in DUC by our construction have an entanglement entropy that is upper bounded by the logarithm of the volume, $S \leq \log (N/2)$ \cite{Sur-PC}.

For any state $\ket{\psi} \in \mathcal{T}$ in the QMBS subspace it follows from our construction that $\hat{\mathbb{U}}\ket{\psi} = \hat{\mathbb{S}} \ket{\psi}$. The QMBS are therefore obtained by diagonalising $\hat{\mathbb{S}}$ in the subspace $\mathcal{T}$. Any of the QMBS can be expressed as a superposition of product states. However, since $\hat{\mathbb{S}}^{N/2} = \hat{\mathbb{I}}$, they can be expressed as superpositions of at most $N/2$ orthonormal product states. It is straightforward to show that any such state has a bipartite entanglement that is bounded by $S(\ket{\psi}) \leq \log (N/2)$. 

We note that this bound may be violated in some of our numerical results due to an eigenphase degeneracy in the QMBS, resulting in the QMBS being formed in the numerics as random superpositions of degenerate orthonormal states that obey the entanglement bound. However, in for such degenerate QMBS is always possible to choose them so that they obey the entanglement bound.

\section{ETH and dynamics of local observables}\label{appendix:local_obs}

By the eigenstate thermalisation hypothesis (ETH) we expect thermalising systems to have eigenstate expectation values that are close to the thermal expectation values (for local observables). If all eigenstates are thermal then the corresponding observables will thermalise, starting from a non-equilibrium state. However, non-thermal eigenstates, like QMBS can prevent thermalisation. In our circuit model, since energy is not conserved, the thermal expectation values are with respect to the infinite temperature state $\hat{\rho}_{T = \infty} = \hat{\mathbb{I}}/d^N$.

We check these features in our DU circuit with QMBS for the local observable \begin{equation}
    \hat{S}^z_\text{tot} = \sum_{n=0}^{N-1} \hat{S}^z_n, \quad \hat{S}_n^z = \sum_{i=0}^{d-1} \frac{2i - (d-1)}{d-1} \ket{i}\bra{i}_n,
\end{equation}
where $\hat{S}_n^{z}$ is the local magnetisation at site $n$. In Fig. \ref{fig:QMBS_in_DUC_exp_vals}, top row, we can clearly see that the QMBS in the three example models are well separated from the rest of the spectrum, which is concentrated around $S_{tot}^z \approx 0$, the expectation value for the infinite temperature thermal state. These QMBS can prevent thermalisation of the total magnetisation $\hat{S}^z_\text{tot}$. To show this, we also numerically calculate the dynamics of the expectation value of $\hat{S}_{tot}^{z} / N$ for the set of initial states that were previously used to probe entanglement growth and fidelity evolution. 

As expected, the $\ket{0}^{\otimes N}$ state, which is a QMBS in all three models, gives a failure to thermalise in any of the examples. 

The $\ket{\Psi _1} = \ket{0}^{\otimes N-1}\otimes (\ket{0} + \ket{d-1})/\sqrt{2}$ state is an equal superposition of the QMBS $\ket{0}^{\otimes N}$, and the state $\ket{0}^{\otimes N-1}\ket{d-1}$ which is in the QMBS subspace only for the model with exponentially many QMBS. Hence, for the single- and two-QMBS examples we observe the magnetisation density approaching $\langle \hat{S}_\text{tot}^z \rangle/N \approx -0.5$ (since the non-QMBS component thermalises but the QMBS component does not). In the model with exponentially many QMBS, the state $\ket{\Psi _1}$ is in the QMBS subspace. Although the state evolves in a non-trivial way [see Fig. \ref{fig:QMBS_in_DUC}(f) in the main text] the dynamics by the swap circuit $\hat{\mathcal{S}}$ preserves the total magnetization, as shown in the blue line in the bottom right plot of Fig. \ref{fig:QMBS_in_DUC_exp_vals}.

The $\ket{\Psi _2} = \ket{0,0,d-1,d-1}^{\otimes N/4-1}\otimes\ket{0,0,d-1}\otimes (\ket{d-1} + \ket{1})/\sqrt{2}$ initial state has an expectation value of $\langle \hat{S}_\text{tot}^z \rangle/N = 0$ for the total magnetisation, but is not in the QMBS subspace for the single- and two-QMBS examples. Therefore we initially see some oscillations at early times, with the expectation value approaching the thermal one for $t \sim \mathcal{O}(10)$. However, for the exponential QMBS example, the initial state $\ket{\Psi _2}$ is entirely in the QMBS subspace, and therefore undergoes dynamics by the swap circuit $\hat{\mathcal{S}}$ which conserves the total magnetization.

\begin{figure}
    \includegraphics[width=\textwidth]{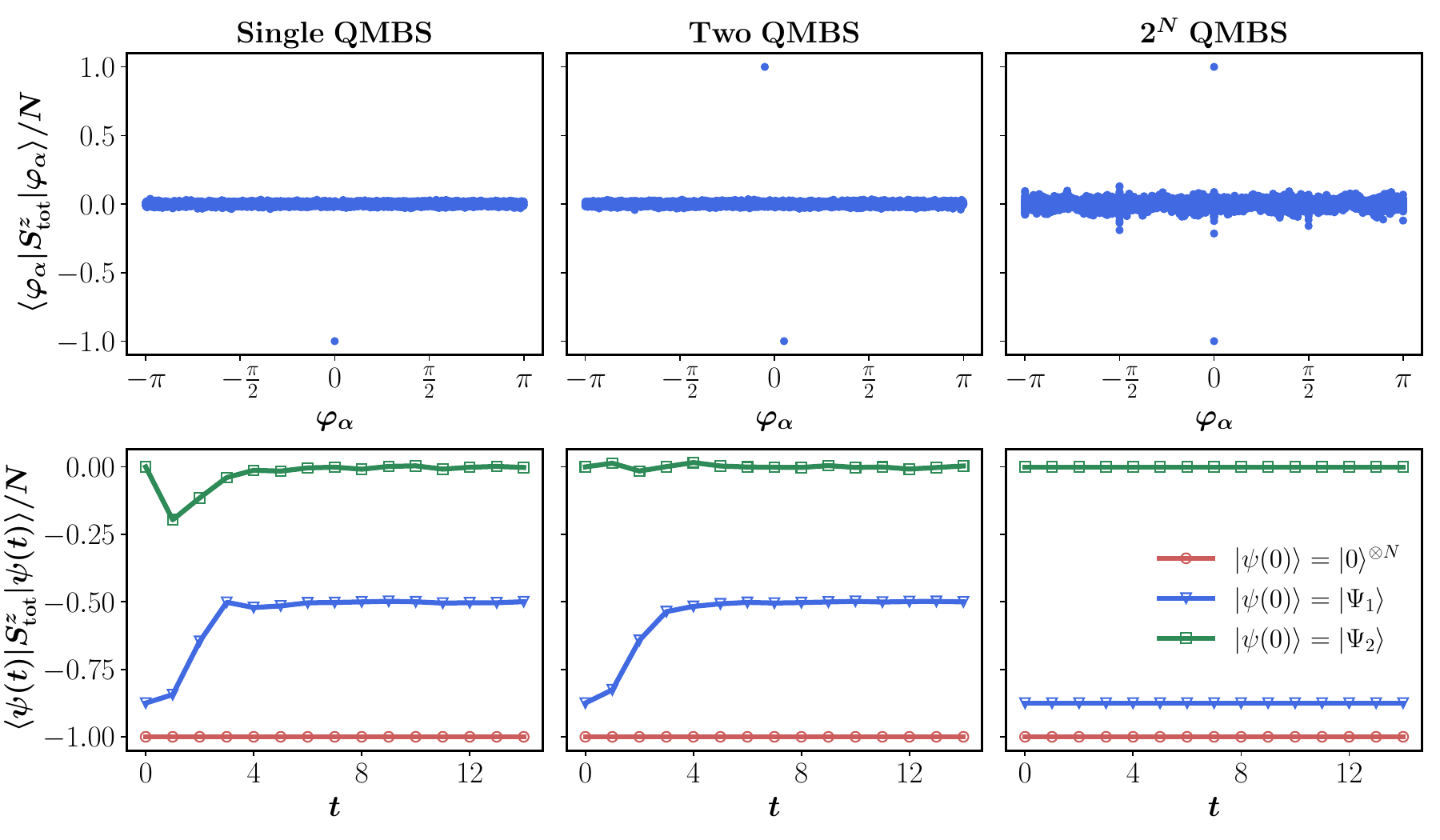}
    \caption{QMBS in DU circuits, shown through the observable $\hat{S}^z_\text{tot} = \sum_{n=0}^{N-1} \hat{S}^z_n$, where $\hat{S}_n^z = \sum_{i=0}^{d-1} (2i-d)\ket{i}\bra{i}_n$ is the local magnetization at site $n$. Top row: the expectation values of eigenstates of the Floquet operator $\hat{\mathbb{U}}$ for the observable $\hat{S}^z_\text{tot}/N$. Most eigenstates are concentrated around the infinite-temperature thermal value $\langle \hat{S}_\text{tot}^z \rangle_\text{th} = \Tr [\hat{S}_\text{tot}^z \hat{\mathbb{I}}]/d^N = 0$, though QMBS are clearly visible in our three examples. Bottom row: dynamics of the expectation value of the total magnetization, starting from different separable states. The initial state $\ket{\psi(0)} = \ket{0}^{\otimes N}$ (red line) is a QMBS in each example, and therefore gives no change in the total magnetization $\hat{S}_\text{tot}^z$. Other initial states are $\ket{\Psi_1} = \ket{0}^{\otimes N-1}\otimes (\ket{0} + \ket{d-1})/\sqrt{2}$ (blue) and $\ket{\Psi_2} = \ket{0,0,d-1,d-1}^{\otimes N/4-1}\otimes\ket{0,0,d-1}\otimes (\ket{d-1} + \ket{1})/\sqrt{2}$ green. [Parameters: $N=8$, $d=3$]}
    \label{fig:QMBS_in_DUC_exp_vals}
\end{figure}

\end{document}